\numberwithin{equation}{section}
\theoremstyle{plain}
\newtheorem{thm}{\protect\theoremname}[section]
  \theoremstyle{definition}
  \newtheorem{defn}[thm]{\protect\definitionname}
  \theoremstyle{plain}
  \newtheorem{lem}[thm]{\protect\lemmaname}
  \theoremstyle{remark}
  \newtheorem{rem}[thm]{\protect\remarkname}
  \providecommand{\definitionname}{Definition}
  \providecommand{\lemmaname}{Lemma}
  \providecommand{\remarkname}{Remark}
\providecommand{\theoremname}{Theorem}
\begin{document}

\global\long\def\body{\mathcal{B}}
\global\long\def\part{\mathcal{P}}
\global\long\def\surface{\mathcal{S}}

\global\long\def\stress{X}
\global\long\def\strain{\chi}
\global\long\def\strech{\varepsilon}

\global\long\def\cbnd{d}
\global\long\def\cochain{X}

\global\long\def\Lip{\mathfrak{L}}
\global\long\def\SS#1{\mathcal{L}_{\mathfrak{s}}\left(#1\right)}
\global\long\def\Limb{\Lip_{\mathrm{Em}}}
\global\long\def\Lmap{\mathcal{F}}

\global\long\def\Emb#1{\text{Emb}\left(#1\right)}

\global\long\def\mol#1{\Phi_{#1}*}

\global\long\def\reals{\mathbb{R}}
\global\long\def\too{\longrightarrow}
\global\long\def\Vs{V}
\global\long\def\we{\wedge}

\global\long\def\Vr{\Vs_{r}}
\global\long\def\Vrd{\Vs^{r}}

\global\long\def\spt{\mathrm{spt}}

\global\long\def\imag#1{\mathrm{image}\left(#1\right)}

\global\long\def\emb{\varphi}

\global\long\def\emr{g}

\global\long\def\refi{\mathfrak{S}}

\global\long\def\sform#1#2{D^{#1}\left(#2\right)}

\global\long\def\currents#1#2{D_{#1}\left(#2\right)}

\global\long\def\Fmass#1{M\left(#1\right)}

\global\long\def\Flmass#1#2{M_{#1}\left(#2\right)}

\global\long\def\Fnormal#1{N\left(#1\right)}

\global\long\def\Fcomass#1{\left\Vert #1\right\Vert _{0}}

\global\long\def\var#1{\mathrm{Var}\left(#1\right)}

\global\long\def\ball#1#2{B\left(#1,#2\right)}

\global\long\def\mtb{\Gamma}

\global\long\def\dns#1#2{d\left(#1,#2\right)}

\global\long\def\compact{K}

\global\long\def\norm#1#2{\left\Vert #1\right\Vert _{#2}}

\global\long\def\cell{\sigma}
\global\long\def\simp{\sigma}
\global\long\def\oset{U}
\global\long\def\pspace{S}

\global\long\def\chain{\mathcal{A}}
\global\long\def\bnd{\partial}

\global\long\def\mass#1{|#1|}

\global\long\def\rest{\raisebox{0.4pt}{\,\mbox{\ensuremath{\llcorner}}\,}}
\global\long\def\irest{\raisebox{0.4pt}{\mbox{\,\ensuremath{\lrcorner}\,}}}

\global\long\def\ess{\mathrm{ess}}

\global\long\def\virv{v}
\global\long\def\mvirv{\xi}

\global\long\def\conf{\kappa}
\global\long\def\confs{\mathcal{Q}}
\global\long\def\gconf{\kappa}

\global\long\def\virvs{W_{\conf}}
\global\long\def\mvirvs{W}

\global\long\def\D{\mathfrak{D}}

\global\long\def\form{\phi}

\global\long\def\Fmass#1{M\left(#1\right)}

\global\long\def\measure#1{\mu_{#1}}

\global\long\def\Fnormal#1{N\left(#1\right)}

\global\long\def\Fflat#1#2{F_{#1}\left(#2\right)}

\global\long\def\Fsharp#1#2{S_{#1}\left(#2\right)}

\global\long\def\lusb{L}

\global\long\def\wcbd{\widetilde{\cbnd}}

\global\long\def\fform#1{D_{#1}}

\global\long\def\strech{\varepsilon}

\global\long\def\ivirv{\chi}

\global\long\def\stress{\tau}

\global\long\def\ti{\mathcal{I}}

\global\long\def\endtime{\varepsilon}

\global\long\def\motion{m}

\global\long\def\Lie{\mathcal{L}}

\global\long\def\Reyo{\mathcal{R}}

\global\long\def\drivt#1{\frac{\partial#1}{\partial t}}

\global\long\def\eps{\varepsilon}

\global\long\def\compm{K_{\motion}}

\global\long\def\bodym{\body_{\motion}}

\global\long\def\ve{\hat{v}}

\newcommand{\note}[1]{**<[#1]>**}

\title{On the Role of Sharp Chains in the Transport Theorem}

\author{L. Falach and R. Segev}
\begin{abstract}
A generalized transport theorem for convecting irregular domains is
presented in the setting of Federer's geometric measure theory. A
prototypical $r$-dimensional domain is viewed as a flat $r$-chain
of finite mass in an open set of an $n$-dimensional Euclidean space.
The evolution of such a generalized domain in time is assumed to be
in accordance to a bi-Lipschitz type map. The induced curve is shown
to be continuous with respect to the flat norm and differential with
respect to the sharp norm on currents in $\reals^{n}$. A time dependent
property is naturally assigned to the evolving region via the action
of an $r$-cochain on the current associated with the domain. Applying
a representation theorem for cochains the properties are shown to
be locally represented by an $r$-form. Using these notions a generalized
transport theorm is presented.
\end{abstract}

\maketitle

\section{Introduction}

Reynolds' transport theorem \cite{Reynolds}, offers a general form
for the formulation of basic conservation laws in continuum mechanics
and in particular in fluid dynamics. The traditional formulation of
Reynolds theorem (or Leibniz-Reynolds theorem) deals with the time
derivative of the integral of $\omega(t)$, a time dependent scalar
field, over a time evolving spatial region $\part(t)$ in a Euclidean
physical space. The region $\part(t)$ is assumed to be the image
of a domain under a smooth motion, where the domain is assumed to
be sufficiently regular such that the classical divergence theorem
is applicable (for example a Lipschitz domain). The transport theorem
states that 
\[
\frac{d}{dt}\left(\int_{\part(t)}\omega(t)d\lusb^{n}\right)\mid_{t=\tau}=\int_{\part(\tau)}\frac{\partial\omega}{\partial t}\mid_{t=\tau}d\lusb^{n}+\int_{\bnd\part(\tau)}\omega(\tau)V\cdot\nu dH^{n-1},
\]
with $\nu$, the unit exterior normal to the boundary $\bnd\part$
and $V$ the velocity associated with the smooth motion. It is noted
that the proof of the Reynolds' transport theorem is attributed by
Truesdell and Toupin \cite[p.~347]{Truesdell1960} to Spielrein (1916).

In the study of deforming thin films or evolving phase boundary a
transport relation for surface integrals is of interest. Such a theorem
is usually refereed to as \emph{Surface Transport Theorem}. It seems
that the basic notions of surface transport theorem in the setting
of continuum mechanics were first introduced in \cite{Gurtin_Murdoch1975}
with the introduction of the \emph{surface divergence operator}. 
Betounes \cite{Betounes1986}, examined the kinematics of an $r$-dimentional
submanifold embedded in an $n$-dimentional semi-Riemannian manifold.
Betounes's formulation brings to light the strong dependence of the
formulation of the surface transport theorem on the availability of
the mean curvature normal. In \cite{Gurtin1989} Gurtin et al. formulated
a surface transport theorem for moving interfaces while additional
study and applications were presented in \cite{Gurtin1988,Angenent1989,Gurtin1990,Gurtin2000,Gurtin2002},
to name a few. 

In the aforementioned versions of the transport theorem, the regularity
of the evolving domain is tacitly assumed. The inclusion of irregular
domains, where such notions as the exterior normal and mean curvature
normal not applicable, in the formulation of the transport theorem
has been presented recently in \cite{Seguin2013,Seguin2013a}. Seguin
\& Fried construct a generalized transport theorem using the setting
of Harrison's theory of differential chains (see \cite{Harrison2012,Harrison2013}).
The proposed formulation allows for singularities to evolve in the
domains, \textit{e.g.}, the domains may develop holes, split into
pieces and the fractal dimensions associated with the domain considered
may change. As their formulation of a transport theorem relies of
a dual relation between the domains and the properties considered
the resulting representation for theses properties is fairly regular.

In \cite{Falach2014}, a transport theorem is presented in the setting
of general manifolds. The domain of integration considered is viewed
as a de-Rham current of compact support thus including highly irregular
domains. The domain is assumed to evolve under a smooth map and integration
of a given property in the classical theory is replaced with the action
of the evolving current on a smooth differential form.

In the present work we wish to present a version of the transport
theorem in the setting of Federer's geometric measure theory. A generalized
domain, or a control volume, is viewed as a flat $r$-chain of finite
mass $T$. The current $T$ is assumed to evolve under the action
of $\conf$, a time dependent bi-Lipschitz homeomorphisms. With $\ti\subset\reals$
representing a time interval the control volume at time $t\in\ti$
is represented by the flat $r$-chain $\conf_{t\#}T$, given by the
pushforward of $T$ by $\conf_{t}=\conf(t)$. A considerable portion
of this work is dedicated to the study of the properties of the induced
curve $t\mapsto\conf_{t\#}T$ which is shown to be continuous with
respect to the flat topology of currents and differentiable with respect
to the sharp topology of currents. The integration of a given property
over the domain is generalized to the action of an $r$-cochain on
the current. 

It is observed that Lipschitz continuity arises naturally as a characteristic
of the proposed setting both is the resulting representation of properties,
by sharp forms, and in the regularity of the motion, independently.

\section{Notation and Preliminaries\label{sec:Notation-and-Preliminaries}}

In this section we review some of the fundamental concepts of the
theory of currents in an $n$-dimensional Euclidean space. Throughout,
the notation is in the same spirit of \cite[Chapter 4]{Federer1969}. 

Let $\oset$ be an open set in $\reals^{n}$, the notation $\D^{r}\left(\oset\right)$
is used for the vector space of smooth, compactly supported real valued
differential $r$-forms defined on $\oset$. The vector space $\D^{r}\left(\oset\right)$
is endowed with a family of semi-norms $\norm{\cdot}{i,\compact}$
such that for a compact $\compact\subset\oset$ and $i\in\mathbb{N}$,
\[
\norm{\form}{i,\compact}=\sup\left\{ \norm{D^{j}\form(x)}{}\mid x\in\compact,\;0\leq j\leq i\right\} .
\]
(Here, the norm $\norm{D^{j}\form(x)}{}$ is induced by some norm
on tensors in $\reals^{n}$.) This family of seminorms endows $\D^{r}\left(\oset\right)$
with a locally convex topology. For $\form\in\D^{r}(\oset)$ we use
$\cbnd\form$ to denote the \textit{exterior derivative }\textit{\emph{of}}
$\form$, an element of $\D^{r+1}(\oset)$. A linear functional $T:\D^{r}(U)\to\reals$
continuous with respect to the topology on $\D^{r}\left(\oset\right)$
is referred to as an $r$-\emph{dimensional}\textit{\emph{ }}\textit{de
Rham current} in $\oset$. The collection of all $r$-dimensional
currents defined on $\oset$ forms the vector space $\D_{r}(U)=\left[\D^{r}(\oset)\right]^{*}$,
the dual vector space of $\D^{r}(\oset)$. Let $T\in\D_{r}(\oset)$
with $r\geq1$ then, $\bnd T$, the\textit{ boundary} of $T$, is
the element of $\D_{r-1}(U)$ defined by 
\begin{equation}
\bnd T(\form)=T(\cbnd\form),\quad\text{for all}\quad\form\in\D^{r-1}(\oset).
\end{equation}
Thus, we have the boundary operator $\bnd=\cbnd^{*}$, the adjoint
operator of the exterior derivative. The support of a current $T\in\D_{r}(\oset)$
is defined by 
\begin{equation}
\spt\left(T\right)=\oset\setminus{\textstyle \bigcup}W,
\end{equation}
where each $W$ is an open subset of $U$ such that $T(\phi)=0$ for
all $\phi\in\D^{r}(\oset)$, with $\spt(\phi)\subset W$. Generally
speaking, the support of a current $T\in\D_{r}\left(\oset\right)$
need not be compact, however, in this work all the currents considered
will be of compact support.

The inner product in $\reals^{n}$ induces an inner product in $\bigwedge_{r}\reals^{n}$,
the vector space of $r$-vector in $\reals^{n}$, and $\mass{\xi}$
will denote the resulting norm of an $r$-vector $\xi$. An $r$-vector
$\xi$ may be written by $\xi=\sum_{\lambda\in\Lambda(r,n)}\xi^{\lambda}e_{\lambda}$
where $\Lambda(r,n)$ is the collection of increasing maps from $\{1,\dots,r\}$
to $\left\{ 1,\dots,n\right\} $, and $\left\{ e_{\lambda}\right\} $
is the standard basis for $\bigwedge_{r}\reals^{n}$ defined by $e_{\lambda}=e_{\lambda(1)}\wedge\dots\wedge e_{\lambda(r)}$.
Thus, with the above notation, $\mass{\xi}=\sqrt{\left\langle \xi,\xi\right\rangle }=\sqrt{\left(\xi^{\lambda}\right)^{2}}.$
Given $\form\in\D^{r}(\oset)$, for every $x\in\oset$, $\form(x)$
is an $r$-covector. For $\form(x)$, as well as any other covector,
one defines 
\begin{equation}
\norm{\form(x)}0=\sup\left\{ \form(x)(\xi)\mid\mass{\xi}\leq1,\,\,\xi\text{ is a simple }r\text{-vector}\right\} .
\end{equation}
For a compact subset $\compact\subset\oset$, define the $\compact$-Lipschitz
constant of $\form\in\D^{r}(\oset)$ by 
\begin{equation}
\Lip_{\phi,\compact}=\sup_{x,\, y\in\compact}\frac{\norm{\form(y)-\form(x)}0}{\mass{y-x}}.\label{eq:Lip_form_K}
\end{equation}
In addition to the topology of test functions, three additional topologies
will be examined on $\D^{r}(\oset)$ each of which is induced by a
corresponding family of semi-norms. For $\compact\subset\oset$ the
\textit{$\compact$-comass semi-norm}\textit{\emph{ of}}\textit{ $\form\in\D^{r}\left(\oset\right)$}
is defined by 
\begin{equation}
\Flmass{\compact}{\form}=\ess\sup_{x\in\compact}\left\{ \norm{\form(x)}0\right\} ,\label{eq:Mass_form}
\end{equation}
the $K$-\textit{flat semi-norm} on $\D^{r}\left(\oset\right)$ is
defined by 
\begin{equation}
\Fflat{\compact}{\form}=\ess\sup_{x\in\compact}\left\{ \norm{\form(x)}0,\norm{\cbnd\form(x)}0\right\} ,\label{eq:Flat_norm_forms}
\end{equation}
and the $\compact$-\emph{sharp semi-norm} on $\D^{r}\left(\oset\right)$
is defined by 
\begin{equation}
\Fsharp{\compact}{\form}=\sup\left\{ \sup_{x\in\compact}\norm{\form(x)}0,(r+1)\Lip_{\form,\compact}\right\} .\label{eq:Sharp_norm_forms}
\end{equation}
The factor $(r+1)$ in the above definition is introduced so that
\begin{equation}
\Fflat{\compact}{\form}\leq\Fsharp{\compact}{\form},\quad\text{for all }\form\in\D^{r}(\oset).\label{eq:F<S}
\end{equation}
(The essential supremum is used above in spite of the fact that we
consider smooth functions because we are going to apply below these
definitions to essentially bounded functions.) Later on, when the
$\ess\sup$ in each of the above terms is evaluated over $\oset$,
we shall write $\Fmass{\form},\,\Fflat{}{\form},\,\Fsharp{}{\form}$
for the mass, flat and Sharp norms, respectively.

For $T\in\D_{r}(\oset)$, the \textit{mass }\textit{\emph{of}}\textit{
$T$} is dually defined by 
\begin{equation}
\Fmass T=\sup\left\{ T\left(\form\right)\mid\phi\in\D^{r}\left(\oset\right),\,\,\Fmass{\form}\leq1\right\} ,\label{eq:Mass_current}
\end{equation}
and 
\[
\Flmass{\compact}T=\sup\left\{ T\left(\form\right)\mid\phi\in\D^{r}\left(\oset\right),\,\,\Flmass{\compact}{\form}\leq1\right\} .
\]

An $r$-dimensional current $T$ is said to be \textit{represented
by integration} if there exists a Radon measure $\measure T$ and
an $r$-vector valued, $\measure T$-measurable function, $\overrightarrow{T}$,
with $\mass{\overrightarrow{T}(x)}=1$ for $\measure T$-almost all
$x\in\oset$, such that 
\begin{equation}
T\left(\form\right)=\int_{\oset}\form(\overrightarrow{T})d\measure T,\quad\text{for all}\quad\form\in\D^{r}(\oset).\label{eq:current_by_integration}
\end{equation}
A sufficient condition for an $r$-dimensional current, $T$, to be
represented by integration is that $T$ is a current of locally finite
mass, \textit{i.e.}, $\Flmass{\compact}T<\infty$ for all compact
subsets $\compact\subset\oset$. An $r$-current $T$ of compact support
is said to be \textit{a normal current} if both $T$ and $\bnd T$
are represented by integration. The notion of normal currents leads
to the following definition 
\begin{equation}
\Fnormal T=\Fmass T+\Fmass{\bnd T},\label{eq:N_norm}
\end{equation}
and clearly, every $T\in\D_{r}(\oset)$ such that $\Fnormal T<\infty$
is a normal $r$-current. The vector space of all $r$-dimensional
normal currents in $\oset$ is denoted by $N_{r}\left(\oset\right)$
and for a compact set $\compact$ of $\oset$, 
\begin{equation}
N_{r,\compact}\left(\oset\right)=N_{r}(\oset)\cap\left\{ T\mid\spt\left(T\right)\subset\compact\right\} .\label{eq:N_km}
\end{equation}

The \textit{$K$-flat norm }on $\D_{r}\left(\oset\right)$ is given
by 
\begin{equation}
\Fflat{\compact}T=\sup\left\{ T\left(\form\right)\mid F_{\compact}\left(\form\right)\leq1\right\} .\label{eq:Flat_norm_currents}
\end{equation}
It follows naturally from the foregoing definition that 
\begin{equation}
\Fflat{\compact}{\bnd T}\leq\Fflat{\compact}T.\label{eq:F(bndT)<F(T)}
\end{equation}
Note that if $T\in\D_{r}\left(\oset\right)$ such that $\Fflat{\compact}T<\infty$,
then, $\spt(T)\subset\compact$. For a given compact subset $\compact\subset\oset$,
the set $F_{r,\compact}(U)$ is defined as the $F_{\compact}$-closure
of $N_{r,\compact}(U)$ in $\D_{r}(U)$. In addition, set 
\begin{equation}
F_{r}(\oset)=\bigcup_{\compact}F_{r,\compact}(U),\label{eq:F_m(U)}
\end{equation}
 where the union is taken over all compact subsets $\compact$ of
$\oset$. An element in $F_{r}(\oset)$ is referred to as a \textit{flat
$r$-chain in $\oset$}. 

For $T\in F_{r,\compact}(\oset)$ it can be shown that $F_{\compact}(T)$
is given by 
\begin{equation}
F_{\compact}(T)=\inf\left\{ \Fmass{T-\bnd S}+\Fmass S\mid S\in\D_{r+1}(\oset),\,\spt(S)\subset\compact\right\} ,\label{eq:Flat_norm_Whitney}
\end{equation}
and by taking $S=0$ it follows that 
\begin{equation}
F_{\compact}(T)\leq\Fmass T.\label{eq:F<M}
\end{equation}
In addition, any element $T\in F_{r,\compact}\left(\oset\right)$
may be represented by $T=R+\bnd S$ where $R\in\D_{r}(\oset)$, $S\in\D_{r+1}(\oset)$,
such that $\spt(R)\subset\compact$, $\spt(S)\subset\compact$, and
\begin{equation}
F_{\compact}(T)=\Fmass R+\Fmass S,\label{eq:T=00003DR+bndS}
\end{equation}
so that $R$ and $S$ are of finite mass. By Equation (\ref{eq:F(bndT)<F(T)})
we note that the boundary of a flat $r$-chain is a flat $(r-1)$-chain. 

The following representation theorem for flat chains is given in \cite[Section 4.1.18]{Federer1969}.
Let $T$ be a flat $r$-chain in $\oset$, then, $T$ is represented
by
\begin{equation}
T=\lusb^{n}\wedge\eta+\bnd\left(\lusb^{n}\wedge\xi\right),\label{eq:flat_chain_representation}
\end{equation}
with $\eta$ an $\lusb^{n}\rest\oset$-summable $r$-vector field
and $\xi$ an $\lusb^{n}\rest\oset$-summable $\left(r+1\right)$-vector
field $\xi$. Here, $\lusb^{n}\rest\oset$ denotes the restriction
of the $n$-dimensional Lebesgue measure to $U$ and for any $p$-current
$T$ and any $r$-vector field $\eta$, the $(p+r)$-current $T\we\eta$
is given by
\[
T\we\eta(\psi)=T(\psi\rest\eta).
\]
For $\form\in\D_{r}\left(\oset\right)$, the action $T(\form)$ is
given therefore by 
\begin{equation}
T(\form)=\int_{U}(\form(\eta)+d\form(\xi))d\lusb^{n}.\label{eq:flat_chain_repr_a}
\end{equation}
A real valued linear functional $\cochain$ defined on $F_{r}\left(\oset\right)$
is said to be a \emph{flat $r$-cochain in $\oset$} if there exists
a number $c<\infty$ such that for any $\compact\subset\oset$ compact
subset 
\[
\cochain\left(T\right)\leq c\Fflat{\compact}T,\quad\text{for all }T\in\Fflat{r,\compact}{\oset}.
\]
The infimum of all bounds $c$ is the norm of $\cochain$. An $r$-form
$\omega$ in $\oset$ is said to be a \textit{flat $r$-form in} $\oset$
if $\omega$ and $\cbnd\omega$, taken in the distributional sense,
are $\lusb^{n}$-measurable and essentially bounded (see \cite[p.~38]{Heinonen2000}).
That is, a measurable $r$-form $\omega$ is a flat $r$-form in $\oset$
if and only if $\Fflat{}{\omega}<\infty$. An important result, \textit{Wolfe's
representation theorem} (see \cite[ch.~7]{Whitney1957}, \cite[sec.~4.1.19]{Federer1969}),
states that each flat $r$-cochains $\cochain$ can be isometrically
identified with a flat $r$-form in $\oset$. 

The notion of sharp chains was initially introduced in Whitney's classical
monograph \cite{Whitney1957}. In the following we present a construction
in the spirit of the formulation of the theory of flat chains in \cite{Federer1969}
(who does not consider sharp chains). 

The $\compact$-\emph{sharp norm} on $\D_{r}\left(\oset\right)$ is
given by 
\begin{equation}
\Fsharp{\compact}T=\sup\left\{ T\left(\form\right)\mid\Fsharp{\compact}{\form}\leq1\right\} .\label{eq:sahrp_norm_currents}
\end{equation}
For a given compact subset $\compact\subset\oset$, the set $\Fsharp{r,\compact}{\oset}$
is defined as the $S_{\compact}$-closure of $N_{r,\compact}(U)$
in $\D_{r}(U)$. In addition, we set 
\begin{equation}
S_{r}(\oset)=\bigcup_{\compact}S_{r,\compact}(U),\label{eq:S_r(U)}
\end{equation}
where the union is taken over all compact subsets $\compact$ of $\oset$.
An element in $S_{r}(\oset)$ is referred to as a \textit{sharp $r$-chain
in $\oset$}. Note that as $F_{r,\compact}(U)\subset S_{r,\compact}(U)$
and the set $N_{r,\compact}\left(\oset\right)$ is a dense set in
both $F_{r,\compact}\left(\oset\right)$ and $S_{r,\compact}\left(\oset\right)$,
it follows that every sharp $r$-chain may be viewed a the limit,
in the sharp topology, of a sequence of flat $r$-chains. A representation
theorem for general sharp chains is beyond the scope of this work,
however, for sharp chains of finite mass such a representation theorem
may be found in \cite[Chapter.~XI]{Whitney1957}. 

Let $\cochain$ be a real valued linear functional defined on $\Fsharp r{\oset}$,
then, $\cochain$ is said to be a\emph{ sharp $r$-cochain} \emph{in
$\oset$} provided there exists a number $b<\infty$ such that for
any $\compact\subset\oset$ compact subset 
\[
\cochain\left(T\right)<b\Fsharp{\compact}T,\quad\text{for all}\; T\in\Fsharp{r,\compact}{\oset}.
\]
The infimum of all bounds $b$ is the norm of $\cochain$.  An $r$-form
$\omega$ in $\oset$ is said to be a \textit{sharp $r$-form in}
$\oset$ if the coefficients of $\omega$ are bounded and Lipschitz
continuous, \textit{i.e.}, $\omega$ is a sharp $r$-from if and only
if 
\[
\Fsharp{}{\omega}=\sup\left\{ \sup_{x\in\oset}\norm{\omega(x)}0,(r+1)\Lip_{\omega}\right\} <\infty.
\]

\begin{thm}
Let $\cochain$ be a sharp $r$-cochain in $\reals^{n}$. Then, $\cochain$
can be isometrically identified with $\fform{\cochain}$, a sharp
$r$-form in $\reals^{n}$ . That is, for any flat chain given by
$T=\lusb^{n}\wedge\eta+\bnd\left(\lusb^{n}\wedge\xi\right)$ 
\[
\cochain\left(T\right)=\int\left[\fform{\cochain}\left(\eta\right)+\cbnd\fform{\cochain}\left(\xi\right)\right]d\lusb^{n}.
\]
 For a sharp $r$-chain, $T$, as flat chains are dense in the space
of sharp chains we have $T=\lim_{i\to\infty}^{S}T_{i}$ and 
\[
\cochain\left(T\right)=\lim_{i\to\infty}\cochain\left(T_{i}\right).
\]

\end{thm}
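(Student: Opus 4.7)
The plan is to reduce the statement to Wolfe's representation theorem for flat cochains (already recalled above) and then to upgrade the regularity of the resulting form from flat to sharp. First I would note that (\ref{eq:F<S}) gives $\Fflat{\compact}{\form} \leq \Fsharp{\compact}{\form}$ on forms, from which $\Fsharp{\compact}{T} \leq \Fflat{\compact}{T}$ follows at once on currents, since the admissible class of test forms for the sharp supremum is contained in the one for the flat supremum. Consequently, any sharp cochain $\cochain$ of bound $b$ restricted to $F_{r,\compact}(\oset)$ satisfies $|\cochain(T)| \leq b\,\Fflat{\compact}{T}$ and is therefore a flat $r$-cochain of norm at most $b$. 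Wolfe's theorem then produces a flat $r$-form $\fform{\cochain}$ for which the stated integral identity holds on every flat chain $T = \lusb^{n}\wedge\eta + \bnd(\lusb^{n}\wedge\xi)$.

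The main step is to promote $\fform{\cochain}$ to a sharp form, that is, to a bounded Lipschitz form with $(r+1)\Lip_{\fform{\cochain}} \leq b$. Boundedness of a suitable representative is already secured by the flat case, so the task reduces to Lipschitz continuity. The idea is to test $\cochain$ on pairs of translated elementary chains: for a simple unit $r$-vector $\xi$, two points $x, y \in \reals^{n}$, and $h > 0$ small, let $\cell_{p,\xi}^{h}$ denote the oriented $r$-parallelepiped at $p$ of edge-length $h$ and orientation $\xi$, viewed as a normal $r$-current. A direct geometric construction yields an $(r+1)$-prism $P^{h}$ with $\bnd P^{h} = \cell_{y,\xi}^{h} - \cell_{x,\xi}^{h} + \text{(side faces)}$ and $\Fmass{P^{h}} \leq h^{r}|y-x|$, and an application of the Whitney-type formula (\ref{eq:Flat_norm_Whitney}) combined with the factor $(r+1)$ built into (\ref{eq:Sharp_norm_forms}) would then give
\[
\Fsharp{\compact}{\cell_{y,\xi}^{h} - \cell_{x,\xi}^{h}} \leq \tfrac{h^{r}|y-x|}{r+1} + o(h^{r}).
\]
Evaluating $\cochain$ on this difference, dividing by $h^{r}$, and letting $h \downarrow 0$ — invoking Lebesgue differentiation to identify the limit with $(\fform{\cochain}(y) - \fform{\cochain}(x))(\xi)$ for almost every pair $(x,y)$ — yields the Lipschitz estimate upon passage to a continuous representative. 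The reverse inequality, giving the isometry, follows directly from the integral representation once one observes that $\cbnd\fform{\cochain}$ is essentially bounded by $(r+1)\Lip_{\fform{\cochain}}$.

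Extension to a general $T \in \Fsharp{r,\compact}{\oset}$ is then routine: the density of $N_{r,\compact}(\oset)$ in $\Fsharp{r,\compact}{\oset}$ recorded after (\ref{eq:S_r(U)}) supplies flat approximants $T_{i}$ with $\Fsharp{\compact}{T - T_{i}} \to 0$, and the defining bound $|\cochain(T) - \cochain(T_{i})| \leq b\,\Fsharp{\compact}{T - T_{i}}$ gives $\cochain(T) = \lim_{i} \cochain(T_{i})$, completing the proof. I expect the hard part to be the geometric prism construction in the middle step, both in capturing the sharp constant $(r+1)^{-1}$ that dovetails exactly with the sharp-form norm, and in justifying the passage from almost-everywhere bounds to a genuine Lipschitz representative of $\fform{\cochain}$.
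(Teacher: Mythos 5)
The paper itself offers no proof of this theorem; it defers entirely to Whitney (Geometric Integration Theory, Section V.10), and your plan is essentially a reconstruction of Whitney's argument: pass from sharp cochains to flat cochains, invoke Wolfe's representation theorem, and then upgrade the representing flat form to a Lipschitz one by testing on translated cells. The bookends of your plan are sound. From $\Fflat{\compact}{\form}\leq\Fsharp{\compact}{\form}$ on forms one does get $\Fsharp{\compact}{T}\leq\Fflat{\compact}{T}$ on currents, so a sharp cochain of norm $b$ restricts to a flat cochain of norm at most $b$ and Wolfe applies; and the final density argument for general sharp chains is exactly right.

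The genuine gap is in the mechanism you propose for the central translation estimate. If you bound $\Fsharp{\compact}{\cell_{y,\xi}^{h}-\cell_{x,\xi}^{h}}$ through the flat norm and the prism $P^{h}$ via (\ref{eq:Flat_norm_Whitney}), the quantity you control is $\Fmass{\cell_{y,\xi}^{h}-\cell_{x,\xi}^{h}-\bnd P^{h}}+\Fmass{P^{h}}$, and the first term is the mass of the lateral faces of the prism, which is of order $|y-x|\,h^{r-1}$. For fixed $x\neq y$ this is not $o(h^{r})$; after dividing by $h^{r}$ it blows up as $h\downarrow0$. This failure is precisely why the flat norm cannot detect Lipschitz continuity of the representing form and why the sharp norm is introduced at all. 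In the paper's dual setting the correct route is much shorter and needs no geometric construction: any test form with $\Fsharp{}{\form}\leq1$ has $\Lip_{\form}\leq1/(r+1)$ by (\ref{eq:Sharp_norm_forms}), hence
\[
\left(\cell_{y,\xi}^{h}-\cell_{x,\xi}^{h}\right)(\form)=\int_{\cell_{x,\xi}^{h}}\left[\form\left(p+(y-x)\right)-\form(p)\right](\xi)\, d\mathcal{H}^{r}(p)\leq\frac{|y-x|\, h^{r}}{r+1},
\]
which is the translation estimate with the exact constant you need (in Whitney's primal formulation the same bound is built directly into the definition of the sharp norm of a polyhedral chain). With that correction the differentiation step identifying $\fform{\cochain}$ and its Lipschitz bound goes through. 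Note also that the reverse inequality for the isometry is not quite the one-liner you suggest: since the paper's sharp norm on currents is a supremum over \emph{smooth} test forms, bounding $|\cochain(T)|=|T(\fform{\cochain})|$ by $\Fsharp{}{\fform{\cochain}}\Fsharp{\compact}{T}$ requires approximating the merely Lipschitz form $\fform{\cochain}$ by smooth forms whose sharp norms converge to $\Fsharp{}{\fform{\cochain}}$, a mollification step your plan omits.
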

For the proof see \cite[Section V.10]{Whitney1957}

Let $\oset\subset\reals^{n}$, $V\subset\reals^{m}$ be open sets
and let $T\in\D_{k}(\oset)$, $S\in\D_{l}(V)$. Then, the Cartesian
product of $T$ and $S$ is an element of $\D_{k+l}\left(\oset\times V\right)$
denoted by $T\times S$ and defined as follows. Let $\omega\in\D^{k+l}(U\times V)$
be given by 
\[
\omega=\sum_{\mass{\alpha}=k,\mass{\beta}=l}\omega_{\alpha\beta}(x,y)dx^{\alpha}\wedge dy^{\beta},
\]
where $x\in U$, $y\in V$, and $\alpha,\;\beta$ are multi-indices.
Then, 
\begin{equation}
T\times S(\omega)=T\left(\sum_{\mass{\alpha}=k}S\left(\sum_{\mass{\beta}=l}\omega_{\alpha\beta}(x,y)dy^{\beta}\right)dx^{\alpha}\right).\label{eq:curtesian_currents_def}
\end{equation}
For the properties of the Cartesian products of currents see \cite[Section 2.3]{Giaquinta1998}.
Let$U\subset\reals$, then, the line segment $[a,b]\subset U$ defines
naturally an element of $\D_{1}\left(U\right)$ such that for $\form\in\D_{1}\left(V\right)=C_{0}^{\infty}\left(V\right)$the
action $[a,b](\form)$ is $[a,b](\form)=\int_{a}^{b}\form(t)d\lusb_{t}^{1}$.
Consider the case of $T\in\D_{r}\left(U\right)$, then, a given $\omega\in\D^{r+1}\left(V\times\oset\right)$
may be split into a horizontal and a vertical component in the form
$\omega=\omega_{H}+\omega_{V}$ by 
\begin{equation}
\omega_{H}(t,x)=\sum_{\mass{\alpha}=r}\omega_{H\alpha}(t,x)dt\wedge dx^{\alpha},\quad\omega_{V}(t,x)=\sum_{\mass{\alpha}=r+1}\omega_{V\alpha}(t,x)dx^{\alpha},
\end{equation}
where, $e_{t}$ is the pre-dual of $dt$. It is observed that 
\[
\sum_{\mass{\alpha}=r}\omega_{H\alpha}(t,x)dx^{\alpha}=\omega\rest e_{t}.
\]
Applying Equation (\ref{eq:curtesian_currents_def}) to $\left([a,b]\times T\right)(\omega)$,
\begin{equation}
\left([a,b]\times T\right)(\omega)=\int_{a}^{b}T\left(\omega_{H}\right)d\lusb_{t}^{1}=\int_{a}^{b}T\left(e_{t}\irest\omega\right)d\lusb_{t}^{1}=\int_{a}^{b}e_{t}\wedge T\left(\omega\right)d\lusb_{t}^{1}.\label{eq:product_comput}
\end{equation}
Generally speaking, the Cartesian product of two flat chains is not
a flat chain. However the Cartesian product of a flat chain and a
normal current is a flat chain (see \cite[Sec. 4.1.12]{Federer1969}.

\section{Lipschitz maps\label{sec:Lipschitz-maps}}

In this section we briefly review some of the relevant properties
of Lipschitz mappings. From the point of view of kinematics, Lipschitz
mappings will be used to model the evolution of body-like regions
in space. 

A map $\Lmap:\oset\to V$ from an open set $\oset\subset\reals^{n}$
to an open set $V\subset\reals^{m}$, is said to be a\textit{ (globally)
Lipschitz map} if there exists a number $c<\infty$ such that $\mass{\Lmap(x)-\Lmap(y)}\leq c\mass{x-y}$
for all $x,\, y\in\oset$. The \textit{Lipschitz constant }of $\Lmap$
is defined by 
\begin{equation}
\Lip_{\Lmap}=\sup_{x,y\in\oset}\frac{|\Lmap(y)-\Lmap(x)|}{|y-x|}.\label{eq:Lipschitz-constant-1}
\end{equation}
The map $\Lmap:\oset\to V$ is said to be \textit{locally Lipschitz}
if for every $x\in\oset$ there is a neighborhood $\oset_{x}\subset\oset$
of $x$ such that the restricted map $\Lmap\mid_{\oset_{x}}$ is a
Lipschitz map. For a locally Lipschitz map, $\Lmap:\oset\to\reals^{m}$,
defined on the open set $\oset\subset\reals^{n}$ and a compact subset
$\compact\subset\oset$, the restricted map $\Lmap\mid_{\compact}$
is globally Lipschitz in the sense that $\Lip_{\Lmap,\compact}$,
the $\compact$-Lipschitz constant of the map $\Lmap\mid_{\compact}$,
is given by 
\begin{equation}
\Lip_{\Lmap,\compact}=\sup_{x,y\in\compact}\frac{\mass{\Lmap(x)-\Lmap(y)}}{\mass{x-y}}.
\end{equation}

The vector space of locally Lipschitz mappings from the open set $\oset\subset\reals^{n}$
to $\reals^{m}$ is denoted by $\Lip\left(\oset,\reals^{m}\right)$.
For a compact subset $\compact\subset\oset$, define the semi-norm
\begin{equation}
\norm{\Lmap}{\Lip,\compact}=\max\left\{ \norm{\Lmap\mid_{\compact}}{\infty},\Lip_{\Lmap,\compact}\right\} ,\label{eq:Lipschit_semi_norm}
\end{equation}
on $\Lip\left(\oset,\reals^{m}\right)$, where, 
\begin{equation}
\|\Lmap\mid_{\compact}\|_{\infty}=\sup_{x\in\compact}\mass{\Lmap(x)}.
\end{equation}
The vector space $\Lip\left(\oset,\reals^{m}\right)$ is endowed with
the strong Lipschitz topology (see \cite{Fukui2005} for the definition
on Riemannian manifolds). It is the analogue of Whitney's topology
(strong topology) for the space of differentiable mappings between
open sets (see \cite[p.~35]{Hirsch}) and is defined as follows. 
\begin{defn}
\label{def:Lipschitz_strong_topology}Given $\Lmap\in\Lip\left(\oset,\reals^{m}\right)$,
for some index set $\Lambda$, let $\mathcal{O}=\left\{ O_{\lambda}\right\} _{\lambda\in\Lambda}$
be an open, locally finite cover of $\oset\subset\reals^{n}$, and
$\mathcal{K}=\left\{ \compact_{\lambda}\right\} _{\lambda\in\Lambda}$
a family of compact subsets covering $\oset$ such that $\compact_{\lambda}\subset O_{\lambda}$
and $\delta=\left\{ \delta_{\lambda}\right\} _{\lambda\in\Lambda}$
a family of positive numbers. A neighborhood $B^{\Lip}\left(\Lmap,\mathcal{O},\delta,\mathcal{K}\right)$
of $\Lmap$ in the strong topology is defined as the collection of
all $g\in\Lip\left(\oset,\reals^{m}\right)$ such that $\norm{\Lmap-g}{\Lip,\compact_{\lambda}}<\delta_{\lambda}$,
\emph{i.e.}, 
\begin{equation}
B^{\Lip}\left(\Lmap,\mathcal{O},\mathcal{K},\delta\right)=\left\{ g\in\Lip\left(\oset,\reals^{n}\right)\mid\:\norm{\Lmap-g}{\Lip,\compact_{\lambda}}<\delta_{\lambda},\,\lambda\in\Lambda\right\} .\label{eq:Lipschitz_strong_topology}
\end{equation}
For an illustrative description for the strong topology in the case
of $C^{0}$-functions, see \cite[p.~59]{Hirsch}. 
\end{defn}
The following lemma shows the strong character of convergence in the
strong topology. For the proof in the case of differentiable mappings,
see \cite[p.~27]{Michor1980} or \cite[p.~43]{Golubitsky1973}.
\begin{lem}
\label{lem:compact_strong_topology}Let $\left\{ \Lmap_{\alpha}\right\} _{\alpha=1}^{\infty}$
be a sequence in $\Lip\left(\oset,\reals^{m}\right)$. Then, the sequence
converges to $\Lmap\in\Lip\left(\oset,\reals^{m}\right)$ in the strong
Lipschitz topology, if and only if there exists a compact subset $\compact\subset\oset$
such that $\Lmap_{\alpha}$ equals $\Lmap$ on $\oset\backslash\compact$
for all but finitely many $\alpha$'s and $\Lmap_{\alpha}\mid_{\compact}$
converges to $\Lmap\mid_{\compact}$ uniformly. 
\end{lem}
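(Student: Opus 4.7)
The plan is to prove the two implications of the equivalence separately, mirroring the standard argument in Hirsch and Michor for the $C^{r}$ strong topology, but adapted to the Lipschitz setting. Throughout, I interpret ``$\Lmap_{\alpha}|_{\compact}$ converges to $\Lmap|_{\compact}$ uniformly'' in the $\|\cdot\|_{\Lip,\compact}$ semi-norm sense, consistent with the $C^{r}$-analogue referenced in \cite[p.~27]{Michor1980}.

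For the forward direction, assume $\Lmap_{\alpha}\to\Lmap$ in the strong Lipschitz topology. I first argue by contradiction that there exists a compact $\compact\subset\oset$ such that $\Lmap_{\alpha}\equiv\Lmap$ on $\oset\setminus\compact$ for all but finitely many $\alpha$. Fix a compact exhaustion $C_{1}\subset C_{2}\subset\cdots$ of $\oset$ with $C_{j}\subset\mathrm{int}(C_{j+1})$ and $\bigcup_{j}C_{j}=\oset$. If no single $\compact$ works, one may extract a subsequence $\alpha_{j}\to\infty$ and points $x_{j}\in\oset\setminus C_{j}$ with $\Lmap_{\alpha_{j}}(x_{j})\ne\Lmap(x_{j})$. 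Enclose each $x_{j}$ in a small open neighborhood $O_{j}$ so that the $O_{j}$ are pairwise disjoint, extend these to a locally finite open cover $\mathcal{O}$ of $\oset$, set $\compact_{j}=\{x_{j}\}$, and choose $\delta_{j}$ strictly less than $|\Lmap_{\alpha_{j}}(x_{j})-\Lmap(x_{j})|$. The resulting strong neighborhood $B^{\Lip}(\Lmap,\mathcal{O},\mathcal{K},\delta)$ excludes every $\Lmap_{\alpha_{j}}$, contradicting convergence. To obtain the uniform convergence on the compact $\compact$, I then choose, for arbitrary $\epsilon>0$, a locally finite cover that contains a single piece $(O_{1},\compact_{1})$ with $\compact\subset\compact_{1}$ and $O_{1}$ relatively compact, together with tolerance $\delta_{1}=\epsilon$; strong convergence yields $\|\Lmap_{\alpha}-\Lmap\|_{\Lip,\compact}\le\|\Lmap_{\alpha}-\Lmap\|_{\Lip,\compact_{1}}<\epsilon$ eventually.

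For the reverse direction, let $\compact$ be as described and let $B^{\Lip}(\Lmap,\mathcal{O},\mathcal{K},\delta)$ be an arbitrary strong neighborhood. Local finiteness of $\mathcal{O}$ combined with compactness of $\compact$ ensures that only finitely many $O_{\lambda}$ meet $\compact$, say for $\lambda$ in a finite set $F$. For $\lambda\notin F$ the compact $\compact_{\lambda}\subset O_{\lambda}$ is disjoint from $\compact$, so eventually $\Lmap_{\alpha}\equiv\Lmap$ on $\compact_{\lambda}$ and $\|\Lmap_{\alpha}-\Lmap\|_{\Lip,\compact_{\lambda}}=0<\delta_{\lambda}$. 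For the finitely many $\lambda\in F$, the difference $\Lmap_{\alpha}-\Lmap$ is eventually supported in $\compact$ and vanishes on $\partial\compact$ by continuity, so a difference-quotient estimate on $\compact\cup\compact_{\lambda}$, broken into the cases where both endpoints lie in $\compact$, both lie outside $\compact$, or straddle $\partial\compact$, reduces the bound on $\|\Lmap_{\alpha}-\Lmap\|_{\Lip,\compact_{\lambda}}$ to $\|\Lmap_{\alpha}-\Lmap\|_{\Lip,\compact}$, which tends to zero by hypothesis.

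The main obstacle I anticipate is the careful construction of the locally finite cover in the forward contradiction step: the points $x_{j}$ may accumulate near the topological boundary of $\oset$ in $\reals^{n}$, so the neighborhoods $O_{j}$ must be chosen sufficiently small to be pairwise disjoint while still admitting an extension to a locally finite open cover of $\oset$. Paracompactness and $\sigma$-compactness of the open set $\oset\subset\reals^{n}$ are essential here. A secondary technical point is the passage, in the reverse direction, from Lipschitz convergence on $\compact$ to Lipschitz convergence on each $\compact_{\lambda}$ for $\lambda\in F$; this requires using the eventual vanishing of $\Lmap_{\alpha}-\Lmap$ outside $\compact$ together with continuity across $\partial\compact$ to uniformly bound the difference quotient when one endpoint lies inside $\compact$ and the other outside.
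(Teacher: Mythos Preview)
The paper does not supply its own proof of this lemma; it only refers the reader to the $C^{r}$ analogue in \cite[p.~27]{Michor1980} and \cite[p.~43]{Golubitsky1973}. Your proposal is precisely that standard argument, correctly transcribed to the Lipschitz seminorms, so you have filled in exactly what the paper omits. Your interpretation of ``uniformly'' as convergence in $\|\cdot\|_{\Lip,\compact}$ is the right one---with mere sup-norm convergence the reverse implication would fail, since the basic strong neighborhoods control $\Lip_{\cdot,\compact_{\lambda}}$.

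One confirmation regarding the straddling case you flag at the end: for $x\in\compact\cap\compact_{\lambda}$ and $y\in\compact_{\lambda}\setminus\compact$, the Euclidean segment $[x,y]\subset\reals^{n}$ (which need not lie in $\oset$) must meet $\partial\compact$ at some point $z$; because $\compact\subset\oset$ is compact and $\oset$ is open, $z\in\compact\subset\oset$ and $z$ is a limit of points of $\oset\setminus\compact$, so continuity gives $(\Lmap_{\alpha}-\Lmap)(z)=0$. Hence
\[
\frac{|(\Lmap_{\alpha}-\Lmap)(x)-(\Lmap_{\alpha}-\Lmap)(y)|}{|x-y|}=\frac{|(\Lmap_{\alpha}-\Lmap)(x)|}{|x-y|}\le\frac{|(\Lmap_{\alpha}-\Lmap)(x)-(\Lmap_{\alpha}-\Lmap)(z)|}{|x-z|}\le\Lip_{\Lmap_{\alpha}-\Lmap,\compact},
\]
which is exactly the reduction you need. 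For the forward contradiction step, your construction works once you use that $\{x_{j}\}$ is closed and discrete in $\oset$; a slightly quicker variant is to fix any locally finite cover $(\mathcal{O},\mathcal{K})$ of $\oset$ by relatively compact pieces and, for each $\lambda$ with $\compact_{\lambda}\cap\{x_{j}\}\neq\varnothing$, set $\delta_{\lambda}=\tfrac{1}{2}\min\{\,|\Lmap_{\alpha_{j}}(x_{j})-\Lmap(x_{j})|:x_{j}\in\compact_{\lambda}\}$, which avoids the extension issue altogether.
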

A map $\emb:\oset\too V$, for open sets $\oset\subset\reals^{n}$,
$V\subset\reals^{m}$ and $m\geq n$, is said to be a \textit{bi-Lipschitz
}map if there are numbers $0<c\leq d<\infty$, such that 
\begin{equation}
c\leq\frac{|\emb(x)-\emb(y)|}{\mass{x-y}}\leq d,\qquad\text{for all}\quad x,\, y\in\oset,\; x\not=y.\label{eq:bi-Lipschitz_constant-1}
\end{equation}
(See \cite[p.~78]{Heinonen2000} for further discussion.) Setting
$L=\max\left\{ \frac{1}{c},d\right\} $, it follows that 
\begin{equation}
\frac{1}{L}\leq\frac{|\emb(x)-\emb(y)|}{\mass{x-y}}\leq L,\qquad\text{for all}\quad x,\, y\in\oset,\; x\not=y,
\end{equation}
and in such a case $\emb$ is said to be $L$-bi-Lipschitz. (See \cite[p.~78]{Heinonen2000}
for further discussion.) 
\begin{defn}
The map $\Lmap:\oset\to V$, where $\oset\subset\reals^{n}$ and $V\subset\reals^{m}$
are open sets such that $m\geq n$, is a \textit{Lipschitz immersion}
if for every $x\in\oset$ there is a neighborhood $U_{x}\subset\oset$
of $x$ such that $\Lmap\mid_{U_{x}}$ is a bi-Lipschitz map, \textit{i.e.},
there are $0<c_{x}\leq d_{x}<\infty$, and 
\begin{equation}
c_{x}\leq\frac{|\emb(y)-\emb(z)|}{\mass{y-z}}\leq d_{x},\quad\text{for all}\quad y,\, z\in U_{x},\; y\not=z.
\end{equation}

\end{defn}

\begin{defn}
A Lipschitz map $\emb:\oset\to V$ is said to be a\textit{ Lipschitz
embedding} if it is a Lipschitz immersion and a homeomorphism of $\oset$
onto $\emb(\oset)$.
\end{defn}
The following theorems pertaining to the set of Lipschitz immersions
and Lipschitz embeddings are given in \cite{Fukui2005} for the setting
of Lipschitz manifolds. Their proofs are analogous to the case of
differentiable mappings as in \cite[p.~36--38]{Hirsch}. 
\begin{thm}
The set of Lipschitz immersions is an open subset of $\Lip\left(\oset,\reals^{m}\right)$
with respect to the strong Lipschitz topology\label{thm:Immersion_open}.

\end{thm}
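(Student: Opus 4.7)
The plan is to exhibit, around any given Lipschitz immersion $\Lmap\in\Lip(\oset,\reals^{m})$, a basic strong-topology neighborhood $B^{\Lip}(\Lmap,\mathcal{O},\mathcal{K},\delta)$ that is entirely contained in the set of Lipschitz immersions. The heart of the argument is the elementary observation that a Lipschitz perturbation of a bi-Lipschitz map remains bi-Lipschitz as long as the perturbation has a small enough Lipschitz constant. The strong-topology cover structure of Definition~\ref{def:Lipschitz_strong_topology} is then used to promote this local stability fact into a global statement.

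First I would use the defining property of a Lipschitz immersion to associate to each $x\in\oset$ an open neighborhood $V_{x}\subset\oset$ together with constants $0<c_{x}\le d_{x}<\infty$ such that $c_{x}|y-z|\le|\Lmap(y)-\Lmap(z)|\le d_{x}|y-z|$ for $y,z\in V_{x}$. Since $\oset$ is a paracompact open subset of $\reals^{n}$, the open cover $\{V_{x}\}_{x\in\oset}$ admits a locally finite refinement $\mathcal{O}=\{O_{\lambda}\}_{\lambda\in\Lambda}$, and on each $O_{\lambda}$ the restriction $\Lmap\mid_{O_{\lambda}}$ is bi-Lipschitz with some constants $c_{\lambda}\le d_{\lambda}$ inherited from the parent $V_{x}$. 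I would then choose, for each $\lambda$, a compact subset $\compact_{\lambda}\subset O_{\lambda}$ in such a way that $\mathcal{K}=\{\compact_{\lambda}\}$ still covers $\oset$ and the interiors of the $\compact_{\lambda}$ already cover $\oset$; this is possible by a standard shrinking argument for locally finite covers of a paracompact space. Finally, set $\delta_{\lambda}=c_{\lambda}/2$.

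The key step is to verify that every $g\in B^{\Lip}(\Lmap,\mathcal{O},\mathcal{K},\delta)$ is a Lipschitz immersion. For any $\lambda$ and any $y,z\in\compact_{\lambda}$, the inequality $\Lip_{\Lmap-g,\compact_{\lambda}}<\delta_{\lambda}$ yields
\[
\bigl|(g(y)-g(z))-(\Lmap(y)-\Lmap(z))\bigr|\le\delta_{\lambda}|y-z|,
\]
so the reverse and forward triangle inequalities give
\[
(c_{\lambda}-\delta_{\lambda})|y-z|\le|g(y)-g(z)|\le(d_{\lambda}+\delta_{\lambda})|y-z|,
\]
i.e.\ $g\mid_{\compact_{\lambda}}$ is bi-Lipschitz with positive lower constant $c_{\lambda}/2$. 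Given an arbitrary point $x\in\oset$, pick $\lambda$ with $x$ in the interior of $\compact_{\lambda}$; then this interior is an open neighborhood of $x$ on which $g$ is bi-Lipschitz, showing that $g$ is a Lipschitz immersion at $x$. Since $x$ was arbitrary, $g$ is a Lipschitz immersion, and therefore the selected $B^{\Lip}(\Lmap,\mathcal{O},\mathcal{K},\delta)$ lies in the set of Lipschitz immersions, establishing openness.

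The only point requiring some care is the set-theoretic bookkeeping: we must arrange a locally finite open cover $\mathcal{O}$ on which $\Lmap$ is bi-Lipschitz with uniform constants on each $O_{\lambda}$, together with a compatible compact family $\mathcal{K}$ whose \emph{interiors} cover $\oset$, so that the bi-Lipschitz control on the $\compact_{\lambda}$ transfers to an open neighborhood of each point. Once this paracompactness-style refinement is in place, the analytic content of the proof is the one-line perturbation estimate above, and no deeper obstacle is anticipated.
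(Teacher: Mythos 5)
Your proof is correct. The paper itself offers no proof of this theorem---it only cites Fukui--Nakamura \cite{Fukui2005} and notes that the argument is analogous to the differentiable case in \cite[p.~36--38]{Hirsch}---and your argument (the perturbation estimate $(c_{\lambda}-\delta_{\lambda})|y-z|\le|g(y)-g(z)|\le(d_{\lambda}+\delta_{\lambda})|y-z|$ on each $\compact_{\lambda}$, together with the shrinking of a locally finite cover so that the interiors of the $\compact_{\lambda}$ still cover $\oset$, which you rightly identify as the only delicate point) is exactly the standard Lipschitz analogue of that argument, so it supplies the details the paper omits.
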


\begin{thm}
The set $\Limb\left(\oset,\reals^{m}\right)$ is open in $\Lip\left(\oset,\reals^{m}\right)$
with respect to the strong Lipschitz topology \label{thm:Lipschitz_embedding_open_set}.

\end{thm}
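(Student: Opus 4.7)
The plan is to adapt the classical proof for $C^r$-embeddings (Hirsch, Michor) to the Lipschitz setting, using Theorem \ref{thm:Immersion_open} as the starting point. Given $\emb \in \Limb(\oset, \reals^m)$, the goal is to exhibit a basic strong-Lipschitz neighborhood $B^{\Lip}(\emb, \mathcal{O}, \mathcal{K}, \delta) \subset \Limb(\oset, \reals^m)$. By Theorem \ref{thm:Immersion_open}, after shrinking the $\delta_\lambda$'s if necessary, every map $\emr$ in the neighborhood is already a Lipschitz immersion; the remaining task is to ensure global injectivity and continuity of $\emr^{-1}$.

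Using paracompactness of $\oset$ together with the local bi-Lipschitz property of $\emb$, I would construct a locally finite open cover $\{O_\lambda\}_{\lambda \in \Lambda}$ with compact closures $K_\lambda = \overline{O_\lambda}$ on each of which $\emb$ is $L_\lambda$-bi-Lipschitz. For $x, y \in K_\lambda$, the reverse triangle inequality gives
\[
|\emr(x) - \emr(y)| \geq |\emb(x) - \emb(y)| - 2\,\norm{\emr - \emb}{\Lip, K_\lambda}\,|x - y| \geq (L_\lambda^{-1} - 2\delta_\lambda)|x - y|,
\]
so choosing $\delta_\lambda < (4 L_\lambda)^{-1}$ preserves the local bi-Lipschitz property of $\emr$ on $K_\lambda$. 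For cross-patch injectivity, if $K_\lambda \cap K_\mu = \emptyset$, then compactness together with the injectivity of $\emb$ yields $\eta_{\lambda,\mu} := d\bigl(\emb(K_\lambda), \emb(K_\mu)\bigr) > 0$. Imposing additionally
\[
\delta_\lambda < \tfrac{1}{4}\,\inf\bigl\{\eta_{\lambda,\mu} \mid \mu \in \Lambda,\ K_\mu \cap K_\lambda = \emptyset\bigr\}
\]
would force $\emr(K_\lambda)$ and $\emr(K_\mu)$ to remain disjoint for every $\emr$ in the neighborhood.

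The principal obstacle is verifying that this infimum is \emph{positive} for each fixed $\lambda$; this is where local finiteness interacts crucially with the homeomorphism-onto-image hypothesis. Suppose for contradiction a sequence of distinct indices $\mu_n$ with $K_{\mu_n} \cap K_\lambda = \emptyset$ satisfies $\eta_{\lambda,\mu_n} \to 0$. Pick $x_n \in K_\lambda$, $y_n \in K_{\mu_n}$ with $|\emb(x_n) - \emb(y_n)| \to 0$. By compactness, a subsequence of $x_n$ converges to some $x \in K_\lambda$, hence $\emb(y_n) \to \emb(x)$ in $\reals^m$. Because $\emb$ is a homeomorphism onto its image, $y_n \to x$ in $\oset$. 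Local finiteness then forces $\mu_n$ to take only finitely many values, so $\eta_{\lambda, \mu_n} \to 0$ must in fact reach $0$ for some $n$, meaning $\emb(K_\lambda) \cap \emb(K_{\mu_n}) \neq \emptyset$, which by injectivity contradicts $K_\lambda \cap K_{\mu_n} = \emptyset$.

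Combining the patchwise bi-Lipschitz lower bound with the cross-patch separation yields that every $\emr$ in the resulting $B^{\Lip}(\emb, \mathcal{O}, \mathcal{K}, \delta)$ is globally injective; continuity of $\emr^{-1}$ then follows automatically from the local bi-Lipschitz lower bounds together with local finiteness, establishing that $\Limb(\oset, \reals^m)$ is open in $\Lip(\oset, \reals^m)$.
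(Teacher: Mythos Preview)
The paper does not supply a proof here; it cites Fukui and remarks that the argument is analogous to Hirsch's for $C^{r}$-embeddings. Hirsch's argument proceeds by contradiction: assuming no strong neighborhood of $\emb$ consists of embeddings, one extracts a sequence $\emr_{k}\to\emb$ with $\emr_{k}(x_{k})=\emr_{k}(y_{k})$, $x_{k}\neq y_{k}$; Lemma~\ref{lem:compact_strong_topology} traps $x_{k},y_{k}$ in a fixed compact set, a subsequential limit yields $\emb(x)=\emb(y)$ and hence $x=y$, and local bi-Lipschitz stability near $x$ then contradicts $\emr_{k}(x_{k})=\emr_{k}(y_{k})$. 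Your route is instead a direct construction of the neighborhood, which is legitimate in principle but requires more care than you give it.

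As written, your construction has a genuine gap. The two cases you treat --- $x,y$ in a common $K_{\lambda}$, versus $x\in K_{\lambda}$, $y\in K_{\mu}$ with $K_{\lambda}\cap K_{\mu}=\emptyset$ --- are not exhaustive. When $K_{\lambda}\cap K_{\mu}\neq\emptyset$ but $x\in K_{\lambda}\setminus K_{\mu}$ and $y\in K_{\mu}\setminus K_{\lambda}$, neither your patchwise lower bound nor your separation estimate applies, and nothing in the argument rules out $\emr(x)=\emr(y)$. A clean repair within your framework: first show that a Lipschitz embedding is automatically bi-Lipschitz on every compact subset of $\oset$ (a short compactness argument combining local bi-Lipschitz with global injectivity); then replace each $K_{\lambda}$ by its star $K_{\lambda}^{\ast}=\bigcup\{K_{\mu}:K_{\mu}\cap K_{\lambda}\neq\emptyset\}$, which is compact by local finiteness, and define the basic neighborhood using the family $\{K_{\lambda}^{\ast}\}$ with $\delta_{\lambda}$ governed by the bi-Lipschitz constant of $\emb$ on $K_{\lambda}^{\ast}$. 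Any pair $x\in K_{\lambda}$, $y\in K_{\mu}$ with $K_{\lambda}\cap K_{\mu}\neq\emptyset$ then lies in the single set $K_{\lambda}^{\ast}$, and your two-case split becomes exhaustive. Alternatively, abandon the explicit construction and run the sequential contradiction argument sketched above, which is what the references the paper cites actually do.
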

In the following, a smooth embedding $\emb$ will be an element of
the set 
\begin{equation}
\Emb{\oset,\reals^{m}}=C^{\infty}\left(\oset,\reals^{m}\right)\cap\Limb\left(\oset,\reals^{m}\right),\label{eq:DefineEmb}
\end{equation}
that is, a Lipschitz embedding whose components are smooth.

\section{The Image of Currents and Homotopy\label{sec:fushforward_homotopy}}

Let $\oset\subset\reals^{n}$ and $V\subset\reals^{m}$ be open sets,
, and let $f:\oset\to V$ be a map of class $C^{\infty}$. We recall
that for any $\omega\in\D^{r}(V)$, the pullback of $\omega$ by $f$,
is the $r$-form in $\oset$ denoted by $f^{\#}(\omega)$ such that
\begin{equation}
\begin{split}f^{\#}(\omega)(x)\left(v_{1}\wedge\dots\wedge v_{r}\right) & =\omega\left(f(x)\right)\left(Df(x)v_{1}\wedge\dots\wedge Df(x)v_{r}\right),\\
 & =\omega\left(f(x)\right)\left(\bigwedge_{r}Df(x)\left(v_{1}\wedge\dots\wedge v_{r}\right)\right),
\end{split}
\label{eq:pullback_form_smooth}
\end{equation}
for any collection of vectors $v_{1},\dots,v_{r}\in\reals^{n}$. Note
that $f^{\#}\left(\omega\right)$ need not be an element of $\D^{r}(\oset)$.
For example, let $f:\oset\to\reals^{n}$ be the inclusion then for
$\omega\in\D^{r}\left(\reals^{n}\right)$ such that $\spt(\omega)\cap\oset\not=\varnothing$
and $\spt(\omega)\cap\left(\reals^{n}\backslash\oset\right)\not=\varnothing$
then $f^{\#}\left(\omega\right)$ is not compactly supported in $\oset$
an thus $f^{\#}\left(\omega\right)\not\in\D^{r}(\oset)$. 

Let $T\in\D_{r}\left(\oset\right)$ such that $f\mid_{\spt\left(T\right)}$
is a proper map. The \emph{pushforward} of $T$ by $f$ is denoted
by $f_{\#}(T)\in\D_{r}(V)$ and is defined by 
\begin{equation}
f_{\#}(T)\left(\omega\right)=T\left(\gamma\wedge f^{\#}(\omega)\right),\quad\text{for all }\omega\in\D^{r}(V),\label{eq:pushforward_current_smooth}
\end{equation}
where $\gamma\in\D^{0}(\oset)$ is any cutoff function satisfying
\[
\spt(T)\cap f^{-1}\left(\spt(\omega)\right)\subset\mathrm{Int}\left\{ x\mid\gamma(x)=1\right\} .
\]
 For $T\in\D_{r}(\oset)$ with $\spt(T)\subset\compact$, where $K$
is a compact subset of $\oset$, we have the following bounds 
\begin{equation}
\begin{split}\Fmass{f^{\#}(T)} & \leq\sup_{x\in\compact}\|Df(x)\|^{r}\Fmass T,\\
\Fnormal{f^{\#}(T)} & \leq\sup_{x\in\compact}\left\{ \|Df(x)\|^{r-1},\|Df(x)\|^{r}\right\} \Fnormal T,\\
F_{f\left\{ \compact\right\} }\left(f^{\#}(T)\right) & \leq\sup_{x\in\compact}\left\{ \|Df(x)\|^{r},\|Df(x)\|^{r+1}\right\} F_{\compact}(T).
\end{split}
\label{eq:f(T)_bounds_smooth}
\end{equation}

Let $\oset\subset\reals^{n}$ be an open set and let $f$ and $g$
be smooth maps of $\oset$ into $\reals^{m}$. For an open set $A$
of $\reals$ such that $\left[a,b\right]\subset A$, a smooth homotopy
between the maps $f$ and $g$ is a map 
\begin{equation}
h:A\times\oset\to\reals^{m},
\end{equation}
such that 
\begin{equation}
h\left(a,x\right)=f(x),\;\text{and }h(b,x)=g(x),\quad\text{for all }x\in\oset.
\end{equation}
Henceforth, the following notation will be used 
\begin{equation}
h_{\tau}(x)=h(\tau,x),\;\text{for all}\, x\in\oset,
\end{equation}
and

\[
\dot{h}_{\tau}:\oset\to\reals^{m},\qquad\dot{h}_{\tau}(x)=Dh(\tau,x)\left(1,0\right)=\frac{\partial h}{\partial\tau}(\tau,x),\;\text{for all }x\in\oset.
\]
 For $T\in\D_{r}(\oset)$ and a homotopy $h$ between $f$ and $g$,
the \textit{$h$-deformation chain of $T$} is defined as the current
\begin{equation}
h_{\#}\left(\left[a,b\right]\times T\right)\in\D_{r+1}\left(\reals^{m}\right).
\end{equation}
Traditionally, the interval $[a,b]$ is taken as the unit interval
$[0,1]$. The properties $h_{\#}\left(\left[a,b\right]\times T\right)$
are further investigated in \cite[Section 4.1.9]{Federer1969} and
\cite[sec.~2.3]{Giaquinta1998}. A fundamental tool is the following
formula 
\begin{equation}
g_{\#}\left(T\right)-f_{\#}\left(T\right)=\bnd h_{\#}\left(\left[a,b\right]\times T\right)+h_{\#}\left(\left[a,b\right]\times\bnd T\right),\label{eq:homotopy_formula_for_currents}
\end{equation}
which is referred to as \emph{the homotopy} \emph{formula for currents.}

Let $\Lmap:\oset\to V$ be a locally Lipschitz map, the image of a
general $r$-current under a locally Lipschitz map is generally undefined
as for any $\omega\in D^{r}\left(V\right)$ the fullback $\Lmap^{\#}\left(\omega\right)$
need not be a smooth differential $r$-form in $\oset$, moreover,
the coefficients of $\Lmap^{\#}(\omega)$ are not necessary Borel
functions. For a normal current $T\in N_{r}\left(\oset\right)$ one
can define, see \cite[Sec.~2.3]{Giaquinta1998}, the pushforward of
$T$ by the locally Lipschitz map $\Lmap$ as the following weak limit
\[
\Lmap_{\#}\left(T\right)(\omega)=\lim_{\rho\to0}\left\{ \left(\left(\mol{\rho}\Lmap\right)_{\#}T\right)(\omega)\right\} ,
\]
where $\left\{ \mol{\rho}\Lmap\right\} _{\rho}$ is a sequence of
smooth approximations obtained by mollification of $\Lmap$ (see \cite[Section 4.1.2]{Federer1969}).
The strong convergence of the sequence is proven by Equation (\ref{eq:homotopy_formula_for_currents}),
as the sequence $\left\{ \left(\mol{\rho}\Lmap\right)_{\#}T\right\} _{\rho}$
is shown to be a Cauchy sequence in the Banach space of flat $r$-chains.
(See \cite[Section 4.1.14]{Federer1969}.) 

The operator $\Lmap_{\#}:N_{r,\compact}\left(\oset\right)\to N_{r,\Lmap\left\{ \compact\right\} }\left(V\right)$
is continuous with respect to the flat norm and thus extends (we keep
the same notation) to $\Lmap_{\#}:F_{r,\compact}\left(\oset\right)\to F_{r,\Lmap\left\{ \compact\right\} }\left(V\right)$.
For $T\in\D_{r}\left(\oset\right)$ with $\spt\left(T\right)\subset\compact$,
the bounds presented in Equation (\ref{eq:f(T)_bounds_smooth}) are
replaced with 
\begin{equation}
\begin{split}\Fmass{\Lmap_{\#}(T)} & \leq\sup_{x\in\compact}\left(\Lip_{\Lmap,\compact}\right)^{r}\Fmass T,\\
\Fnormal{\Lmap_{\#}(T)} & \leq\sup_{x\in\compact}\left\{ \left(\Lip_{\Lmap,\compact}\right)^{r-1},\left(\Lip_{\Lmap,\compact}\right)^{r}\right\} \Fnormal T,\\
F_{f\left\{ \compact\right\} }\left(\Lmap_{\#}(T)\right) & \leq\sup_{x\in\compact}\left\{ \left(\Lip_{\Lmap,\compact}\right)^{r},\left(\Lip_{\Lmap,\compact}\right)^{r+1}\right\} F_{\compact}(T).
\end{split}
\label{eq:f(T)_bounds_Lipschitz}
\end{equation}
For $T\in N_{r,\compact}\left(\oset\right)$ the existence of $\Lmap_{\#}(T)\in F_{r,\Lmap\left\{ \compact\right\} }\left(V\right)$,
and the second bound in Equation (\ref{eq:f(T)_bounds_Lipschitz}),
imply that $\Lmap_{\#}(T)\in N_{r,\Lmap\left\{ \compact\right\} }\left(V\right)$. 

Alternatively, one may define the pushforward $\Lmap_{\#}(T)$ by
utilizing the duality of flat chains and flat forms and setting 
\[
\Lmap_{\#}(T)(\omega)=\cochain_{\Lmap^{\#}(\omega)}(T),\quad\text{for all }\omega\in\D^{r}\left(\omega\right).
\]
 By Rademacher\textquoteright s theorem the derivative a Lipschitz
mapping exists for $\lusb^{n}$-almost every $x\in\oset$. Thus, Equation
(\ref{eq:pullback_form_smooth}) is meaningful for $\lusb^{n}$-almost
every $x\in\oset$ and $\Lmap^{\#}(\omega)$ is a flat $r$-form in
$\oset$. It follows that $\cochain_{\Lmap^{\#}(\omega)}$ is a flat
$r$-cochain and the action $\cochain_{\Lmap^{\#}(\omega)}(T)$ is
well defined. The homotopy theorem for currents, and in particular,
the homotopy formula given in Equation (\ref{eq:homotopy_formula_for_currents})
discussed above for smooth maps, is therefore extended to maps $h:A\times\oset\to\reals^{m}$
which are locally Lipschitz maps. We note that a similar definition
and Wolfe's representation theorem are applied in \cite[Sction X.9]{Whitney1957},
to define the pullback of a flat form by a Lipschitz map.

\section{The Lie derivative \label{sec:Lie-derivative}}

In this section we examine the regularity of the Lie derivative of
a differential $r$-form. Cartan's (magic) formula is a key element
in the following analysis and as a first step we examine the contraction
of a differential form by a smooth vector field.

We first introduce a component representation that will be useful
throughout this section. The summation convention will be used unless
otherwise stated. Let $v=v^{i}e_{i}$, $e_{i}=\nicefrac{\partial}{\partial x^{i}}$,
and $\omega=\omega_{\lambda}dx^{\lambda}$ with $\lambda\in\Lambda\left(n,r+1\right)$.
Then,
\begin{equation}
\omega\rest v=\left(\omega_{\lambda}dx^{\lambda}\right)\rest\left(v^{i}e_{i}\right)=v^{i}\omega_{\lambda}dx^{\lambda}\rest e_{i},\label{eq:v_omega_com}
\end{equation}
 and the exterior derivative of $\omega\rest v$ is given by 
\begin{equation}
\begin{split}\cbnd(\omega\rest v) & =\left(v^{i}\omega_{\lambda}\right)_{,j}dx^{j}\wedge\left(dx^{\lambda}\rest e_{i}\right),\\
 & =\left(v_{,j}^{i}\omega_{\lambda}+v^{i}\omega_{\lambda,j}\right)dx^{j}\wedge\left(dx^{\lambda}\rest e_{i}\right).
\end{split}
\label{eq:d(v_omega)_com}
\end{equation}
Consider the $M_{\compact}$-seminorm of $\omega\rest v$ and recall
that 
\begin{equation}
\Flmass{\compact}{\omega\rest v}=\ess\sup_{x\in\compact}\left\{ \|\left(\omega\rest v\right)(x)\|\right\} .\label{eq:M_K(v_w)}
\end{equation}
The $r$-form $\omega\rest v$ has $C(n,r)$ components each of which
is a sum of $(n-r)$ terms each of which is a multiplication of a
component of $\omega$ with a component of $v$. Hence, 
\begin{equation}
\Flmass{\compact}{\omega\rest v}\leq C(n,r)\sup_{x\in\compact}\|v(x)\|\Flmass{\compact}{\omega}.\label{eq:Mass_v_w}
\end{equation}

\begin{rem}
\label{rem:v_wedge_T_not_flat}For a smooth vector field, $v:\oset\to\reals^{n}$,
and $\omega\in\D^{r}\left(\oset\right)$ such that $\spt(\omega)\subset\compact$,
note that $\spt\left(\cbnd(\omega\rest v)\right)\subset\compact$
and the components of $\cbnd(v\irest\omega)$ are functions in $C_{0}^{\infty}\left(\oset\right)$.
Thus, 
\[
\Fmass{\cbnd(\omega\rest v)}=\ess\sup_{x\in\compact}\left\{ \|\cbnd(\omega\rest v)(x)\|\right\} <\infty.
\]
However, one cannot find a $C<\infty$ such that 
\[
\sup_{x\in\compact}\left\{ \|\cbnd(\omega\rest v)(x)\|\right\} \leq C\norm v{\Lip,\compact}\sup_{x\in\compact}\left\{ \|\cbnd\omega(x)\|\right\} .
\]
 \end{rem}
\begin{defn}
The Lie derivative of the differential form $\omega\in\D^{r}\left(\oset\right)$
with respect to the vector field $v$ on $\oset$ is the differential
$r$-form in $\oset$ denoted by $\Lie_{v}\omega$ and defined by
\begin{equation}
\emb_{t}^{\#}\left(\Lie_{v}\omega\right)=\frac{d}{dt}\left(\emb_{t}^{\#}\omega\right),\label{eq:Lie_form_def}
\end{equation}
where $\emb:\reals\times\oset\to\oset$ is the flow associated with
the vector field $v$ \cite[p.~370]{Marsden1988}. \label{def:Lie-derivative}
\end{defn}
The classical Cauchy-Lipschitz theory asserts the existence of a
flow for $v$, a time dependent vector field which is Lipschitz continuous
in the spatial variable and uniformly continuous with respect to the
time variable \cite[Sec. IV.1]{Lang1999}. The existence of flows
for vector fields with reduced regularity, such as vector fields of
bounded variation, is an active field of research and for further
discussion see \cite{Bouchut2006} and references cited therein. It
is noted that in the rest of this work the existence of a flow for
the Lipschitz vector field follows from the assumptions regarding
the motion described below. 

The Lie derivative of a differential form $\omega$ with respect to
the vector field $v$ satisfies the identity 
\begin{equation}
\Lie_{v}\omega=\cbnd\left(\omega\rest v\right)+\cbnd\omega\rest v,\label{eq:Cartan_forms}
\end{equation}
which is commonly known as \textit{Cartan's (magic) formula}. Note
that in case $v$ is a smooth vector field on $\oset$, it follows
that $\Lie_{v}\omega\in\D^{r}\left(\oset\right)$ for every $\omega\in\D^{r}\left(\oset\right)$.
\begin{lem}
Let $\omega\in\D^{r}\left(\oset\right)$ and let $v:\oset\to\reals^{n}$
be a smooth vector field. Then, there exists a constant $C\left(n,r\right)$
such that 
\begin{equation}
\Flmass{\compact}{\Lie_{v}\omega}\leq C(n,r)S_{\compact}(\omega)\norm v{\Lip,\compact}.\label{eq:M_Lie_deriv}
\end{equation}
Moreover, the Lie derivative of a sharp $r$-form with respect to
a smooth vector field, taken in the weak sense, is an $r$-form of
locally finite mass.\label{lem:Reugularity_of_Lie_derivative}\end{lem}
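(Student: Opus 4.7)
The plan is to combine Cartan's formula (\ref{eq:Cartan_forms}) with the component expansion (\ref{eq:d(v_omega)_com}) and a key interior-product identity that eliminates the full exterior derivative $\cbnd\omega$ from the bound. Writing $\omega=\omega_\lambda\,dx^\lambda$ with $\lambda\in\Lambda(n,r)$ and $v=v^i e_i$, the expansion (\ref{eq:d(v_omega)_com}) adapted to an $r$-form yields
\begin{equation*}
\cbnd(\omega\rest v)=v^i_{,j}\omega_\lambda\, dx^j\wedge(dx^\lambda\rest e_i)+v^i\omega_{\lambda,j}\, dx^j\wedge(dx^\lambda\rest e_i).
\end{equation*}
Applying the graded Leibniz rule $(dx^j\wedge dx^\lambda)\rest e_i=\delta^j_i\,dx^\lambda-dx^j\wedge(dx^\lambda\rest e_i)$ to $\cbnd\omega\rest v=v^i\omega_{\lambda,j}\,(dx^j\wedge dx^\lambda)\rest e_i$ and summing via (\ref{eq:Cartan_forms}), the two ``mixed'' pieces cancel and I arrive at the clean expression
\begin{equation*}
\Lie_v\omega=v^i_{,j}\omega_\lambda\, dx^j\wedge(dx^\lambda\rest e_i)+v^j\omega_{\lambda,j}\, dx^\lambda.
\end{equation*}
The crucial feature is that $\Lie_v\omega$ is now expressed only through products of single partial derivatives of $v$ and of the scalar components $\omega_\lambda$; the full $(r+1)$-form $\cbnd\omega$ never appears.

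With this reduced expression I bound the two terms separately in the comass seminorm, using the elementary estimate that the comass of $\sum_I\alpha_I\,dx^I$ is controlled by the sum of $|\alpha_I|$. For the first term, each $|v^i_{,j}(x)|\leq\Lip_{v,\compact}$ essentially, and each coefficient $\omega_\lambda(x)=\omega(x)(e_\lambda)$ satisfies $|\omega_\lambda(x)|\leq\norm{\omega(x)}{0}$ because $e_\lambda$ is a simple $r$-vector of unit mass. For the second term, $|v^j(x)|\leq\sup_{\compact}\mass{v}$, and the scalar partial $|\omega_{\lambda,j}(x)|\leq\Lip_{\omega,\compact}$ follows from the mean value theorem together with the same extraction $|\omega_\lambda(y)-\omega_\lambda(x)|\leq\norm{\omega(y)-\omega(x)}{0}\leq\Lip_{\omega,\compact}|y-x|$. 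Collecting the dimensional combinatorial factors (analogous to the $C(n,r)$ tracked in (\ref{eq:Mass_v_w})) and identifying $\max\{\sup_\compact\mass{v},\Lip_{v,\compact}\}=\norm{v}{\Lip,\compact}$ and $\max\{\sup_\compact\norm{\omega(x)}{0},(r+1)\Lip_{\omega,\compact}\}=\Fsharp{\compact}{\omega}$ delivers the inequality (\ref{eq:M_Lie_deriv}).

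For the second assertion I take the simplified expression for $\Lie_v\omega$ as the \emph{definition} of the weak Lie derivative when $\omega$ is merely a sharp $r$-form. Such an $\omega$ is essentially bounded and Lipschitz, so Rademacher's theorem supplies the partials $\omega_{\lambda,j}$ almost everywhere in $\oset$ with essential bounds by $\Lip_{\omega,\compact}$ on each compact $\compact\subset\oset$, while the components $\omega_\lambda$ are essentially bounded by $\sup_\compact\norm{\omega(x)}{0}$. The same term-by-term estimate, now taken in essential suprema, yields $\Flmass{\compact}{\Lie_v\omega}<\infty$ for every $\compact\subset\oset$; so $\Lie_v\omega$ is an $r$-form of locally finite mass. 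Alternatively, mollifying $\omega\mapsto\mol\rho\omega$ produces smooth approximants whose sharp seminorm on $\compact$ is bounded uniformly in $\rho$, the smooth estimate applies to each $\Lie_v(\mol\rho\omega)$, and passage to the weak limit yields the same conclusion.

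The principal difficulty lies in the first step: locating the interior-product cancellation that removes $\cbnd\omega$ from $\Lie_v\omega$. Without it, one would be forced to estimate $\Flmass{\compact}{\cbnd\omega}$ --- that is, to invoke the flat rather than the sharp seminorm of $\omega$ --- and the lemma would fail for general sharp forms, as already flagged in Remark \ref{rem:v_wedge_T_not_flat}. Once the reduced expression is in hand, the remaining estimates are routine combinatorial bookkeeping together with a standard Rademacher or mollification argument to accommodate the non-smooth sharp case.
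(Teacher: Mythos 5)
Your proposal is correct and follows essentially the same route as the paper: both reduce $\Lie_v\omega$ via Cartan's formula to the representation $\Lie_v\omega=D(\omega)(v)+v^i_{,j}\omega_\lambda\,dx^j\wedge(dx^\lambda\rest e_i)$ (your second term $v^j\omega_{\lambda,j}\,dx^\lambda$ is exactly $D(\omega)(v)$), bound the two pieces by $\Lip_{\omega,\compact}\norm{v}{\infty,\compact}$ and by a combinatorial constant times $\Flmass{\compact}{\omega}\Lip_{v,\compact}$, and invoke Rademacher's theorem for the sharp case. The only difference is that you spell out the interior-product cancellation that the paper leaves implicit, which is a welcome clarification but not a different argument.
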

\begin{proof}
By Equations (\ref{eq:v_omega_com}), (\ref{eq:d(v_omega)_com}) and
(\ref{eq:Cartan_forms}), a local representation of the Lie derivative
$\Lie_{v}\omega$ is given by 
\begin{equation}
\Lie_{v}\omega=D(\omega)(v)+v_{,j}^{i}\omega_{\lambda}dx^{j}\wedge\left(dx^{\lambda}\rest e_{i}\right).\label{eq:Lie_derive_components}
\end{equation}
For a given $\lambda\in\bigwedge(n,r)$, the contraction $dx^{\lambda}\rest e_{i}$
does not vanish for $r$ base vectors $e_{k}$.For a selection of
$\lambda$ and $k$ such that $dx^{\lambda}\rest e_{k}\not=0$, the
wedge product $dx^{j}\wedge\left(dx^{\lambda}\rest e_{k}\right)$
will not vanish for a subset of $\{dx^{j}\}$ containing $(n-(r-1))$
elements. Hence,
\begin{equation}
\begin{split}\Flmass{\compact}{\Lie_{v}\omega} & \leq\Flmass{\compact}{D(\omega)(v)}+\Flmass{\compact}{v_{,j}^{i}\omega_{\lambda}dx^{j}\wedge\left(dx^{\lambda}\rest e_{i}\right)},\\
 & \leq\Lip_{\omega,\compact}\norm v{\infty,\compact}+\frac{n!}{r!\left(n-r\right)!}r(n-(r-1))\Flmass{\compact}{\omega}\Lip_{v,\compact}\\
 & \leq C(n,r)\norm v{\Lip,\compact}S_{\compact}(\omega).
\end{split}
,\label{eq:M(Lie(w))<S(w)}
\end{equation}
The extension to sharp forms follows from Rademacher's theorem which
implies that for a sharp form $\omega$, $D\omega$ exists almost
everywhere in $\oset$.
\end{proof}
For a smooth vector field $v:\oset\to\reals^{n}$ defined on the open
set $\oset\subset\reals^{n}$ and an $r$-current $T\in\D_{r}\left(\oset\right)$,
the $\left(r+1\right)$-current $v\wedge T$ is defined by 
\begin{equation}
v\wedge T\left(\omega\right)=T\left(\omega\rest v\right),\quad\text{for all }\omega\in\D^{r+1}\left(\oset\right).\label{eq:v_wedge_smooth}
\end{equation}
 As $v$ is a smooth vector field it follows that $\omega\rest v\in\D^{r}\left(\oset\right)$
and $T\left(\omega\rest v\right)$ is well defined. 

By the remark preceding Definition \ref{def:Lie-derivative}, it follows
that given $T\in F_{r,\compact}\left(\oset\right)$ and a smooth vector
field $v:\oset\to\reals^{n}$, the $(r+1)$-current $v\wedge T$ is
not necessarily a flat $(r+1)$-chain. Moreover, even if $T\in N_{r,\compact}\left(\oset\right)$
the current $v\wedge T$ may not be a flat $(r+1)$-chain. The analysis
of $v\wedge T$ is a key element in what follows and the foregoing
remark is an example for the restricted applicability of the flat
norm.

The contraction of a vector field and a differential form defined
above may be extended to include locally Lipschitz vector fields.
Let $\omega\in\D^{r+1}\left(\oset\right)$, where $\oset\subset\reals^{n}$
is an open set, and let $v:\oset\to\reals^{n}$ be a locally Lipschitz
vector field. Define $\omega\rest v$ as the pointwise limit of the
contractions with the mollified vector fields $\mol{\rho}v$, \emph{i.e.,}
\[
\omega\rest v(x)=\lim_{\rho\to0}\left\{ \omega(x)\rest\left(\mol{\rho}v\right)(x)\right\} .
\]
 We note that the convergence of the above limit is locally uniform
with respect to $x$. In a similar manner to the estimate in Equation
(\ref{eq:Mass_v_w}) we have 
\[
\begin{split}\Flmass{\compact}{\omega\rest v} & =\lim_{\rho\to0}\Flmass{\compact}{\omega\rest\left(\mol{\rho}v\right)}\\
 & \leq\lim_{\rho\to0}C(n,r)\ess\sup_{x\in\compact}\|\mol{\rho}v(x)\|\Flmass{\compact}{\omega}\\
 & =C\left(n,r\right)\|v\|_{\infty,\compact}\Flmass{\compact}{\omega}.
\end{split}
\]
For the Lie derivative of $\omega\in\D^{r}\left(\oset\right)$ with
respect to the Lipschitz vector field we have 
\[
\Lie_{v}\omega=\lim_{\rho\to0}\left(\Lie_{\mol{\rho}v}\omega\right)=\lim_{\rho\to0}\left(\cbnd\left(\omega\rest\left(\mol{\rho}v\right)\right)+\cbnd\omega\rest\left(\mol{\rho}v\right)\right),
\]
where the above limit is taken with respect to the $\Flmass{\compact}{\cdot}$-semi-norm.
The existence of the limit follows from the fact that 
\[
\lim_{\rho\to0}\norm{\mol{\rho}v-v}{\Lip,\compact}=0,
\]
and the bound given in Equation (\ref{eq:M(Lie(w))<S(w)}). Moreover,
the bound in Equation (\ref{eq:M(Lie(w))<S(w)}) holds for locally
Lipschitz vector field.

\section{Smooth Configurations and Motions \label{sec:Smooth_motion}}

Let $\body\subset\reals^{n}$ and $\ti\subset\reals$ bounded open
subsets. Recalling (\ref{eq:DefineEmb}), a \emph{smooth motion} $\motion$
defined over the time interval $\ti$ is viewed as a curve 
\begin{equation}
\motion:\ti\to\Emb{\body,\reals^{n}}.\label{eq:Def:motion}
\end{equation}
We assume that the motion $\motion$ is a $C^{1}$-curve with respect
to strong Lipschitz topology, that is, the derivative of the curve,
denoted by $\dot{\motion}$, is viewed as a curve 
\[
\dot{\motion}:\ti\to C^{\infty}\left(\body,\reals^{n}\right)\cap\Lip\left(\body,\reals^{n}\right),
\]
which is continuous with respect to the strong Lipschitz topology.

The motion $\motion$, induces a map 
\begin{equation}
\emb:\ti\times\body\to\reals^{n},\label{eq:emb_motion}
\end{equation}
by 
\[
\emb(\tau,x)=\motion(\tau)(x),\quad\text{for all}\;\tau\in\ti,\: x\in\body,
\]
and so 
\[
\frac{\partial\emb}{\partial t}(\tau,x)=\dot{\motion}(\tau)(x),\quad\text{for all}\;\tau\in\ti,\: x\in\body.
\]
It follows from Lemma \ref{lem:compact_strong_topology}, that there
exists a compact subset $\compm\subset\body$ such that for any $x\not\in\compm$
and every $t,\, t'\in\ti$
\begin{equation}
\emb(t,x)=\emb\left(t',x\right).\label{eq:emb_on_compact}
\end{equation}
Hence, for $x\not\in\compm$ and $t\in\ti$ 
\begin{equation}
\dot{\emb}\left(t,x\right)=0.\label{eq:velocity_on_compact}
\end{equation}
For some $t\in\ti$, set $\body'=\emb_{t}\left\{ \body\right\} $
and $\compm'=\emb_{t}\left\{ \compm\right\} $. By the preceding argument,
$B'$ and $\compm'$ are independent of the particular choice of $t\in\ti$. 
\begin{rem}
Equations (\ref{eq:emb_on_compact}, \ref{eq:velocity_on_compact})
and the existence of $\compact_{\motion}$ are key features of the
motion examined and stem from the use of the strong Lipschitz topology.
A drawback to the use of the strong Lipschitz topology is in the relatively
small supply of converging sequences of maps in the form of Equation
(\ref{eq:emb_motion}) converging to a motion as given in Equation
(\ref{eq:Def:motion}).
\end{rem}
For each $t\in\ti$, $\emb_{t}=\motion(t)\in\Emb{\body,\reals^{n}}$,
so there exists an inverse $\eta_{t}:\imag{\emb_{t}}=\body'\to\body$
such that $\emb_{t}\circ\eta_{t}=I_{\body'}$ with $I_{\body'}$ the
identity map on the set $\body'$. Consider the vector field 
\begin{equation}
v_{t}:\body'\to\reals^{n},\qquad v_{t}=\dot{\emb}_{t}\circ\eta_{t},\label{eq:EulerianVelocity-2}
\end{equation}
viewed as a vector field on $\body'$ such that $v_{t}(y)=0$ for
every $x\in\body'\backslash\compm'$. The vector field $v_{t}$ is
naturally extended to a vector field $\ve_{t}:\reals^{n}\to\reals^{n}$,
by setting 
\begin{equation}
\ve_{t}(x)=\begin{cases}
v_{t}(x),\: & x\in\body',\\
0, & x\not\in\body'.
\end{cases}\label{eq:V_extension}
\end{equation}
Thus,$\ve_{t}$ is a smooth vector field which vanishes on $\reals^{n}\backslash\compm'$
. It follows that, 
\[
\ve:\ti\times\reals^{n}\to\reals^{n},
\]
 is a time dependent Lipschitz vector field defined on $\reals^{n}$.

For $s,\, t\in\ti$, define 
\begin{equation}
J_{s,t}(x)=\begin{cases}
\emb_{s}\circ\eta_{t}(x), & x\in\body',\\
x, & x\not\in\body'.
\end{cases}\label{eq:smooth_flow}
\end{equation}
As shown in \cite{Falach2014}, 
\begin{equation}
\frac{\partial J_{s,t}}{\partial s}(x)=\ve_{s}\left(J_{s,t}\left(x\right)\right),\quad J_{t,t}(x)=x.\label{eq:ODE_flow_smooth}
\end{equation}
The map $J_{s,t}$, is the flow associated with the time dependent
vector field $\ve$. For $\omega\in\D^{r}\left(\reals^{n}\right)$,
\[
\frac{\partial\left(J_{\tau,t}^{\#}\omega\right)}{\partial\tau}\mid_{\tau=s}=J_{s,t}^{\#}\left(\Lie_{\ve_{s}}\omega\right),
\]
and as $\emb_{\tau}^{\#}\omega=\emb_{t}^{\#}\left(J_{\tau,t}^{\#}\omega\right)$,
it follows from Definition \ref{eq:Lie_form_def} and a direct computation
that 
\begin{equation}
\frac{\partial\left(\emb_{\tau}^{\#}\omega\right)}{\partial\tau}\mid_{\tau=t}=\emb_{t}^{\#}\left(\Lie_{\ve_{t}}\omega\right).\label{eq:d/dt_pullback_smooth}
\end{equation}

We now derive a representation formula for the $\emb$-deformation
chain associated with the motion $\emb:\ti\times\body\to\reals$ and
a general current $T\in\D_{r}\left(\body\right)$. Let $[a,b]\subset\ti$,
then, applying Equation (\ref{eq:product_comput}) to $\omega\in\D^{r+1}\left(\reals^{n}\right)$,
one has 
\begin{equation}
\begin{split}\emb_{\#}\left(\left[a,b\right]\times T\right)\left(\omega\right) & =\left(\left[a,b\right]\times T\right)\left(\emb^{\#}\left(\omega\right)\right),\\
 & =\int_{a}^{b}T\left(\emb_{\tau}^{\#}\left(\omega\right)_{H}\right)d\lusb_{\tau}^{1},\\
 & =\int_{a}^{b}T\left(\emb_{\tau}^{\#}\left(\omega\right)\rest e_{t}\right)d\lusb_{\tau}^{1}.
\end{split}
\label{eq:integral_homotopy}
\end{equation}
In order to examine the $r$-form $\emb_{\tau}^{\#}\left(\omega\right)\rest e_{t}$,
we apply it to an $r$-vector $\xi$ which we can assume to be ``space-like'',
that is, $\xi=v_{1}\wedge\dots\wedge v_{r}$ with $v_{i}\in\reals^{n}$,
for $i=1,\dots,r$. Otherwise, $\left(\emb_{\tau}^{\#}\left(\omega\right)\rest e_{t}\right)(\xi)=0$,
identically. One obtains, 
\begin{equation}
\begin{split}\left(\emb_{\tau}^{\#}\left(\omega\right)(x)\rest e_{t}\right)(\xi) & =\emb^{\sharp}(\omega)(\tau,x)(e_{t}\wedge v_{1}\wedge\dots\wedge v_{r}),\\
 & =\omega\circ\emb_{\tau}(x)\left(D\emb(\tau,x)\left(e_{t}\right)\wedge\bigwedge_{r}D\emb_{\tau}(x)\left(\xi\right)\right),\\
 & =\omega\circ\emb_{\tau}(x)\left(\dot{\emb}_{\tau}(x)\wedge\bigwedge_{r}D\emb_{\tau}(x)\left(\xi\right)\right),\\
 & =((\omega\circ\emb_{\tau})\rest\dot{\emb}_{\tau})(x)\left(\bigwedge_{r}D\emb_{\tau}(x)(\xi)\right).
\end{split}
\end{equation}

As $\dot{\emb}_{\tau}(x)=v_{\tau}(\emb_{\tau}(x))=\ve_{\tau}\circ\emb_{\tau}(x),$
\begin{equation}
\begin{split}\left(\emb^{\#}(\omega)(\tau,x)\rest e_{t}\right)(\xi) & =\left((\omega\circ\emb_{\tau})\rest\left(\ve_{\tau}\circ\emb_{\tau}\right)\right)(x)\left(\bigwedge_{r}D\emb_{\tau}(x)(\xi)\right),\\
 & =(\omega\rest\ve_{\tau})\circ\emb_{\tau}(x)\left(\bigwedge_{r}D\emb_{\tau}(x)(\xi)\right),\\
 & =\emb_{\tau}^{\#}(\omega\rest\ve_{\tau})(x)(\xi).
\end{split}
\end{equation}
It is concluded that, 
\begin{equation}
\emb_{\tau}^{\#}(\omega)\rest e_{t}=\emb_{\tau}^{\#}(\omega\rest\ve_{\tau}).\label{eq:e_contraction_pullback_form}
\end{equation}

Returning to Equation (\ref{eq:integral_homotopy}), note that the
integrand may be rewritten as 
\begin{equation}
\begin{split}T(\emb_{\tau}^{\#}(\omega)\rest e_{t}) & =T(\emb_{\tau}^{\#}(\omega\rest\ve_{\tau})),\\
 & =\emb_{\tau\#}(T)(\omega\rest\ve_{\tau}),\\
 & =\ve_{\tau}\we\emb_{\tau\#}(T)\left(\omega\right),
\end{split}
\end{equation}
and Equation (\ref{eq:integral_homotopy}) assumes the form
\begin{equation}
\emb_{\#}\left(\left[a,b\right]\times T\right)\left(\omega\right)=\int_{a}^{b}(\ve_{\tau}\we\emb_{\tau\#}T)(\omega)d\lusb_{\tau}^{1}.\label{eq:emb_pushforward_T}
\end{equation}
Applying (\ref{eq:homotopy_formula_for_currents}) to (\ref{eq:homotopy_formula_for_currents}),
one finally has, 
\[
\begin{split}\left(\emb_{b\#}\left(T\right)-\emb_{a\#}\left(T\right)\right)\omega & =\left(\bnd\emb_{\#}\left(\left[a,b\right]\times T\right)+\emb_{\#}\left(\left[a,b\right]\times\bnd T\right)\right)\omega,\\
 & =\int_{a}^{b}\left[(\ve_{\tau}\we\emb_{\tau\#}T)(\cbnd\omega)+(\ve_{\tau}\we\emb_{\tau\#}\bnd T)(\omega)\right]d\lusb_{\tau}^{1},\\
 & =\int_{a}^{b}\left[(\emb_{\tau\#}T)\left(\cbnd\omega\rest\ve_{\tau}+\cbnd\left(\omega\rest\ve_{\tau}\right)\right)\right]d\lusb_{\tau}^{1},\\
 & =\int_{a}^{b}\left[(\emb_{\tau\#}T)\left(\Lie_{\ve_{\tau}}\omega\right)\right]d\lusb_{\tau}^{1}.
\end{split}
\]

\section{The Kinematics of Currents under a Smooth Motion \label{sec:kinematics_cerrents}}

This section is devoted to the examination of kinematic properties
of generalized domains. A generalized $r$-dimensional oriented domain,
is naturally viewed as an $r$-current. In the selection of the appropriate
class of domains, the collection of all $r$-currents is far greater
than what we would consider as suitable. The selection of the appropriate
class is motivated by the following guidelines. Firstly, a current
representing a generalized domain must have a local character, at
least in some measure theoretic sense. Secondly, such a current must
have a definite, quantitative notion of a boundary.Finally, such a
current should be well behaved under the image of a Lipschitz map.%
\footnote{We feel that these requirement are in the spirit put forth by Noll
\& Virga in \cite{Noll1988} where the class admissible bodies should
include all those that can be imagined by an engineer but exclude
those that can be dreamt up only by an ingenious mathematician.%
} The introductory discussion in Sections \ref{sec:Notation-and-Preliminaries}
and \ref{sec:fushforward_homotopy} indicates that a convenient choice
for the class of domains is the collection of flat chains of finite
mass. Thus, a prototypical control volume $T$, is viewed as a flat
$r$-chain of finite mass in $\body$. Using the properties of a motion
we outlined above and the corresponding notation of Section \ref{sec:Smooth_motion},
we consider $T\in F_{r,\compm}\left(\body\right)$, where $\compm\subset\body$
is the compact set containing the region where the motion is nontrivial.

\begin{rem}
Let $T\in F_{r}\left(\body\right)$ and $\gamma:\body\to\reals$ a
locally Lipschitz function, the multiplication $\gamma\wedge T$ is
flat $r$-chain. Thus, a flat $r$-chain may represent not only a
geometric domain but may also be represent some intensive property.
See \cite{Falach2013} for further details.
\end{rem}
Consider a map $\emb$ induced by a motion as defined in Equation
(\ref{eq:emb_motion}) and a flat chain $T\in F_{r,\compm}\left(\body\right)$
such that $\Fmass T<\infty$. The curve $t\mapsto\emb_{t\#}\left(T\right)$
will be viewed in this work as the time evolution of the control volume
represented by the current $T$. 
\begin{lem}
Let $\emb$ be the map associated with a motion as defined by Equation
(\ref{eq:emb_motion}) and $T\in F_{r,\compm}\left(\body\right)$
a flat chain of finite mass. The curve induced by the pushforward
$t\mapsto\emb_{t\#}T$ is a continuous curve with respect to the $M_{K_{m}^{'}}$-norm
on $\D_{r}\left(\reals^{n}\right)$.\label{lem:emb_t(T)_smooth_con}\end{lem}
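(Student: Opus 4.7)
The plan is to establish mass-norm continuity of $t\mapsto\emb_{t\#}T$ by combining the representation theorem (\ref{eq:flat_chain_representation}) for flat chains of finite mass with the $C^{1}$-Lipschitz regularity of the motion $\motion$. The key observation is that the coefficients in a flat-chain representation of $\emb_{t\#}T$ can be described via an a.e.\ change of variables by $\emb_{t}^{-1}$, and continuity of the motion at the level of the Lipschitz seminorm will force $L^{1}$-continuity of these coefficients.

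Concretely, I would first write $T=\lusb^{n}\we\eta+\bnd(\lusb^{n}\we\xi)$ with summable vector fields $\eta,\xi$ supported in $\compm$. Commuting the pushforward with the boundary (legitimate for Lipschitz maps by the discussion in Section \ref{sec:fushforward_homotopy}) and applying Rademacher's theorem for an a.e.\ change of variables yields
\[
\emb_{t\#}T=\lusb^{n}\we\eta_{t}+\bnd(\lusb^{n}\we\xi_{t}),
\]
with $\eta_{t}(y)=|\det D\emb_{t}^{-1}(y)|\,(\bigwedge_{r}D\emb_{t})(\emb_{t}^{-1}(y))\,\eta(\emb_{t}^{-1}(y))$ and an analogous formula for $\xi_{t}$. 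By Lemma \ref{lem:compact_strong_topology} and the $C^{1}$ assumption on $\motion$, the maps $\emb_{t}^{-1}$ and $D\emb_{t}$ depend continuously on $t$ uniformly on $\compm'$, and the Jacobians remain uniformly bounded. Continuity of translations in $L^{1}$, applied to $\eta$ and $\xi$ against the uniformly convergent changes of variable, together with dominated convergence against the uniformly bounded derivative factors, then produces $\|\eta_{t}-\eta_{s}\|_{L^{1}}+\|\xi_{t}-\xi_{s}\|_{L^{1}}\to 0$ as $s\to t$. This immediately controls the contribution coming from the $\eta_{t}$-part to $\Fmass{\emb_{t\#}T-\emb_{s\#}T}$.

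The main obstacle is converting the $L^{1}$-continuity of $\xi_{t}$ into mass-norm continuity of the boundary term $\bnd(\lusb^{n}\we\xi_{t})$, since the boundary operator is unbounded on the mass norm. To close this gap I would invoke the integral homotopy identity derived at the end of Section \ref{sec:Smooth_motion},
\[
(\emb_{t\#}T-\emb_{s\#}T)(\omega)=\int_{s}^{t}(\emb_{\tau\#}T)(\Lie_{\ve_{\tau}}\omega)\,d\lusb^{1}_{\tau},
\]
valid for smooth $\omega$, and combine the uniform mass bound on $\emb_{\tau\#}T$ furnished by (\ref{eq:f(T)_bounds_Lipschitz}) with the Lie-derivative regularity supplied by Lemma \ref{lem:Reugularity_of_Lie_derivative}. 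Using Cartan's formula in the mollified-approximation picture from the end of Section \ref{sec:Lie-derivative} to transfer the derivative off $\omega$ via the smoothness of $\ve_{\tau}$, one should extract a bound of the form $|(\emb_{t\#}T-\emb_{s\#}T)(\omega)|\leq C|t-s|\Flmass{\compm'}{\omega}$, where $C$ depends on $\Fmass T$ and on the Lipschitz data of $\motion$. This is the desired mass-norm continuity, and the $L^{1}$-representation of the previous paragraph confirms that the limit is precisely zero rather than merely bounded.
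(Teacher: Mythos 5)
Your first two paragraphs take a genuinely different and more structural route than the paper: the paper never decomposes $T$, but instead bounds $\Flmass{\compm'}{\emb_{t+\eps\#}T-\emb_{t\#}T}$ by $\Fmass T\,\sup_{\omega}\Flmass{\compm}{\emb_{t+\eps}^{\#}\omega-\emb_{t}^{\#}\omega}/\Flmass{\compm'}{\omega}$ and estimates the pullbacks directly. Your diagnosis is the right one: the absolutely continuous part $\lusb^{n}\we\eta_{t}$ is mass-continuous by $L^{1}$-continuity of the change of variables, and the whole difficulty sits in the boundary part $\bnd(\lusb^{n}\we\xi_{t})$.

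The proposed repair in your third paragraph does not close that gap. Lemma \ref{lem:Reugularity_of_Lie_derivative} controls $\Flmass{\compm'}{\Lie_{\ve_{\tau}}\omega}$ only by $C(n,r)\,S_{\compm'}(\omega)\norm{\ve_{\tau}}{\Lip,\compm'}$, and the sharp norm cannot be weakened to the comass: $\Lie_{\ve_{\tau}}\omega$ contains $D\omega(\ve_{\tau})$, and $\sup\norm{D\omega}{}$ is not dominated by $\sup\norm{\omega}0$. Nor can the derivative be transferred off $\omega$: writing $(\emb_{\tau\#}T)(\Lie_{\ve_{\tau}}\omega)=\bnd(\emb_{\tau\#}T)(\omega\rest\ve_{\tau})+(\ve_{\tau}\we\emb_{\tau\#}T)(\cbnd\omega)$, the first term needs $\Fmass{\bnd(\emb_{\tau\#}T)}<\infty$, i.e.\ that $T$ be normal rather than merely a flat chain of finite mass, and the second still involves $\cbnd\omega$. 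So the homotopy identity yields at best $\left|(\emb_{t\#}T-\emb_{s\#}T)(\omega)\right|\leq C|t-s|\,\Fsharp{\compm'}{\omega}$, never $C|t-s|\Flmass{\compm'}{\omega}$. In fact no bound of the latter form can exist: take $T$ to be integration over a closed smooth $r$-surface with $r<n$ and $\emb_{t}$ a translation near $\spt(T)$; then $\emb_{t\#}T$ and $\emb_{s\#}T$ are carried by disjoint surfaces for small $t\not=s$, so $\Flmass{\compm'}{\emb_{t\#}T-\emb_{s\#}T}=2\Fmass T$ does not tend to $0$. The obstruction you isolated is therefore genuine; what your homotopy argument actually proves is Lipschitz continuity of $t\mapsto\emb_{t\#}T$ in the flat (or sharp) norm, which is the statement announced in the abstract. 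For what it is worth, the paper's own proof meets the same uniformity problem: the ratio $\Flmass{\compm'}{\omega\circ\emb_{t+\eps}-\omega\circ\emb_{t}}/\Flmass{\compm'}{\omega}$ tends to $0$ for each fixed $\omega$ but not uniformly over $\left\{ \omega\mid\Flmass{\compm'}{\omega}\leq1\right\} $, so it too delivers only convergence of $(\emb_{t+\eps\#}T)(\omega)$ for each fixed $\omega$ rather than convergence in the $M_{\compm'}$-norm.
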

\begin{proof}
Let $t\in\ti$ and select $\eps$ such that $t+\eps\in\ti$. Then,
\[
\begin{split}\Flmass{\compm'}{\emb_{t+\eps\#}T-\emb_{t\#}T} & =\sup_{\omega\in\D^{r}\left(\reals^{n}\right)}\frac{\left(\emb_{t+\eps\#}T-\emb_{t\#}T\right)\omega}{\Flmass{\compm'}{\omega}},\\
 & =\sup_{\omega\in\D^{r}\left(\reals^{n}\right)}\frac{T\left(\emb_{t+\eps}^{\#}\omega-\emb_{t}^{\#}\omega\right)}{\Flmass{\compm'}{\omega}},\\
 & \leq\sup_{\omega\in\D^{r}\left(\reals^{n}\right)}\frac{\Fmass T\Flmass{\compm}{\emb_{t+\eps}^{\#}\omega-\emb_{t}^{\#}\omega}}{\Flmass{\compm'}{\omega}},
\end{split}
\]
where the last line follows from the integral representation of $T$.
For the term $\Flmass{\compm}{\emb_{t+\eps}^{\#}\omega-\emb_{t}^{\#}\omega}$,
a direct computation shows that 
\[
\begin{split} & \Flmass{\compm}{\emb_{t+\eps}^{\#}\omega-\emb_{t}^{\#}\omega}\\
 & =\sup_{x\in\compm}\left\{ \sup_{\xi}\left\{ \omega\left(\emb_{t+\eps}\left(x\right)\right)\left[\bigwedge_{r}D\emb_{t+\eps}(x)(\xi)\right]-\omega\left(\emb_{t}\left(x\right)\right)\left[\bigwedge_{r}D\emb_{t}(x)(\xi)\right]\right\} \right\} ,\\
 & \leq\sup_{x\in\compm}\left\{ \sup_{\xi}\left\{ \omega\left(\emb_{t+\eps}\left(x\right)\right)\left[\bigwedge_{r}D\emb_{t+\eps}(x)(\xi)\right]-\omega\left(\emb_{t+\eps}\left(x\right)\right)\left[\bigwedge_{r}D\emb_{t}(x)(\xi)\right]\right\} \right\} \\
 & \quad+\sup_{x\in\compm}\left\{ \sup_{\xi}\left\{ \omega\left(\emb_{t+\eps}\left(x\right)\right)\left[\bigwedge_{r}D\emb_{t}(x)(\xi)\right]-\omega\left(\emb_{t}\left(x\right)\right)\left[\bigwedge_{r}D\emb_{t}(x)(\xi)\right]\right\} \right\} ,\\
 & \leq\sup_{x\in\compm}\left\{ \sup_{\xi}\left\{ \omega\left(\emb_{t+\eps}\left(x\right)\right)\left[\bigwedge_{r}D\left(\emb_{t+\eps}-\emb_{t}\right)(x)(\xi)\right]\right\} \right\} \\
 & \quad+\sup_{x\in\compm}\left\{ \sup_{\xi}\left\{ \left(\omega\left(\emb_{t+\eps}\left(x\right)\right)-\omega\left(\emb_{t}\left(x\right)\right)\right)\left[\bigwedge_{r}D\emb_{t}(x)(\xi)\right]\right\} \right\} ,\\
 & \leq\Flmass{\compm'}{\omega}\left(\Lip_{\emb_{t+\eps}-\emb_{t},\compm}\right)^{r}+\left(\Lip_{\emb_{t},\compm}\right)^{r}\Flmass{\compm'}{\omega\circ\emb_{t+\eps}-\omega\circ\emb_{t}}.
\end{split}
\]
By the continuity of the motion with respect to the strong Lipschitz
topology it follows that as $\eps\to0$ we have $\left(\Lip_{\emb_{t+\eps}-\emb_{t},\compm}\right)^{r}\to0$.
For the second term, as $\norm{\emb_{t+\eps}-\emb_{t}}{\infty,\compm}\to0$,
and since $\omega$ is smooth, it follows that $\Flmass{\compm'}{\omega\circ\emb_{t+\eps}-\omega\circ\emb_{t}}\to0$,
which completes the proof.
\end{proof}
As considered in \cite{Falach2014}, for all $\omega\in\D^{r}\left(\oset\right)$,
\[
\begin{split}\frac{d}{dt}\left(\left(\emb_{t\#}T\right)(\omega)\right)\mid_{t=\tau} & =\frac{d}{dt}\left(T\emb_{t}^{\#}(\omega)\right)\mid_{t=\tau},\\
 & =T\left(\frac{d}{dt}\emb_{t}^{\#}(\omega)\mid_{t=\tau}\right),\\
 & =T\left(\emb_{\tau}^{\#}\left(\Lie_{\ve_{\tau}}\omega\right)\right).
\end{split}
\]
Thus, using Equation (\ref{eq:Cartan_forms}) 
\[
\begin{split}T\left(\emb_{\tau}^{\#}\left(\Lie_{\ve_{\tau}}\omega\right)\right) & =\emb_{\tau\#}T\left(\cbnd\left(\omega\rest\ve_{\tau}\right)+\left(\cbnd\omega\right)\rest\ve_{\tau}\right),\\
 & =\left(\ve_{\tau}\wedge\bnd\left(\emb_{\tau\#}T\right)+\bnd\left(\ve_{\tau}\wedge\emb_{\tau\#}T\right)\right)\omega.
\end{split}
\]
The foregoing result applies to general currents and is not restricted
to flat chains of finite mass. It may also be written with the introduction
of $\Reyo_{\ve_{t}}=\left(\Lie_{\ve_{t}}\right)^{*}$ as the dual
operator of the Lie derivative, that is 
\[
T\left(\emb_{t}^{\#}\left(\Lie_{\ve_{t}}\omega\right)\right)=\Reyo_{\ve_{t}}\left(\emb_{t\#}T\right)\omega.
\]
Thus, for any de Rham current $T\in\D_{r}\left(\reals^{n}\right)$
\begin{equation}
\Reyo_{\ve_{t}}(T)=\ve_{t}\wedge\bnd T+\bnd\left(\ve_{t}\wedge T\right).\label{eq:Reynold_operator}
\end{equation}
It is observed that similar results have been reported in \cite{Chi2012}.

The previous analysis shows that the derivative $\nicefrac{d(\emb_{t\#}T)}{dt}$
converges in the topology of $\D_{r}\left(\reals^{n}\right)$. The
following theorem considers the convergence of the limit above in
the sharp norm topology.
\begin{thm}
\label{thm:driv_smooth_motion} Let $\emb$ be a map associated with
a motion as defined by Equation (\ref{eq:emb_motion}) and let $T\in F_{r,\compm}\left(\body\right)$
be a flat chain of finite mass. The derivative $\frac{d}{dt}\left(\emb_{t\#}T\right)\mid_{t=\tau}$,
exists in the topology of $\Fsharp{\compm'}{\reals^{n}}$ and is given
by
\begin{equation}
\frac{d}{dt}\left(\emb_{t\#}T\right)\mid_{t=\tau}=\bnd\left(\ve_{\tau}\wedge\emb_{\tau\#}T\right)+\ve_{\tau}\wedge\bnd\left(\emb_{\tau\#}T\right).\label{eq:driv_formal}
\end{equation}
 \end{thm}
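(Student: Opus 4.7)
Plan: My strategy is to leverage the integral identity derived at the end of Section~\ref{sec:Smooth_motion}, namely
\[
(\emb_{\tau+\eps\#}T-\emb_{\tau\#}T)(\omega)=\int_{\tau}^{\tau+\eps}(\emb_{s\#}T)(\Lie_{\ve_{s}}\omega)\,ds,
\]
which is valid for every $\omega\in\D^{r}(\reals^{n})$ since each $\emb_{s}$ is a smooth embedding. Rewriting the integrand via the Reynolds-type operator $\Reyo_{\ve_{s}}$ introduced in (\ref{eq:Reynold_operator}), this says precisely that $\emb_{\tau+\eps\#}T-\emb_{\tau\#}T=\int_{\tau}^{\tau+\eps}\Reyo_{\ve_{s}}(\emb_{s\#}T)\,ds$ in the distributional sense. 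Dividing by $\eps$, the claim reduces to showing that $s\mapsto\Reyo_{\ve_{s}}(\emb_{s\#}T)$ is continuous at $s=\tau$ with respect to the $\Fsharp{\compm'}{\cdot}$-seminorm on currents; granted this, the mean value of the integrand over $[\tau,\tau+\eps]$ converges in sharp norm to $\Reyo_{\ve_{\tau}}(\emb_{\tau\#}T)$ as $\eps\to 0$.

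To establish the required continuity, fix a sharp form $\omega$ with $\Fsharp{\compm'}{\omega}\leq 1$ and perform the splitting
\[
(\emb_{s\#}T)(\Lie_{\ve_{s}}\omega)-(\emb_{\tau\#}T)(\Lie_{\ve_{\tau}}\omega)=(\emb_{s\#}T-\emb_{\tau\#}T)(\Lie_{\ve_{s}}\omega)+(\emb_{\tau\#}T)(\Lie_{\ve_{s}-\ve_{\tau}}\omega),
\]
where linearity of the Lie derivative in the vector field has been used. Using $|R(\psi)|\leq\Fmass{R}\,\Flmass{\compm'}{\psi}$ together with the estimate $\Flmass{\compm'}{\Lie_{v}\omega}\leq C(n,r)\,\norm{v}{\Lip,\compm'}\,\Fsharp{\compm'}{\omega}$ from Lemma~\ref{lem:Reugularity_of_Lie_derivative}, I would bound the first summand by $C\,\norm{\ve_{s}}{\Lip,\compm'}\,\Fmass{\emb_{s\#}T-\emb_{\tau\#}T}$, which tends to zero by Lemma~\ref{lem:emb_t(T)_smooth_con}; the prefactor $\norm{\ve_{s}}{\Lip,\compm'}$ remains bounded on $\ti$ by the $C^{1}$ hypothesis on $\motion$. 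The same Lie-derivative estimate, applied to $\ve_{s}-\ve_{\tau}$, bounds the second summand by $C\,\Fmass{\emb_{\tau\#}T}\,\norm{\ve_{s}-\ve_{\tau}}{\Lip,\compm'}$, which tends to zero as $s\to\tau$ by continuity of $\dot{\motion}$ in the strong Lipschitz topology together with the corresponding Lipschitz regularity of the inverses $\eta_{s}$ used to define $\ve_{s}=\dot{\emb}_{s}\circ\eta_{s}$.

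Taking the supremum over sharp unit forms and averaging these two bounds over $[\tau,\tau+\eps]$ then produces an estimate that tends to zero with $\eps$, which is exactly convergence of the difference quotient to $\Reyo_{\ve_{\tau}}(\emb_{\tau\#}T)=\bnd(\ve_{\tau}\wedge\emb_{\tau\#}T)+\ve_{\tau}\wedge\bnd(\emb_{\tau\#}T)$ in the $\Fsharp{\compm'}{\cdot}$-topology. The main obstacle, and the reason the sharp rather than the flat topology is the correct setting, lies entirely in the Lie-derivative estimate of Lemma~\ref{lem:Reugularity_of_Lie_derivative}: as emphasized in Remark~\ref{rem:v_wedge_T_not_flat}, no analogous bound holds in the flat seminorms, because controlling $\cbnd(\omega\rest v)$ requires the Lipschitz constant of $v$ to pair with $\omega$ itself, and it is precisely this extra information that the sharp seminorm encodes over the flat one. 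Once this estimate is granted, the remainder of the argument is a triangle inequality combined with the two continuity statements already in force for the curves $s\mapsto\emb_{s\#}T$ and $s\mapsto\ve_{s}$.
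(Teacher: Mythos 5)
Your proposal is correct and follows essentially the same route as the paper: both start from the homotopy formula in its integral form, reduce the sharp-norm convergence of the difference quotient to continuity of $s\mapsto(\emb_{s\#}T)(\Lie_{\ve_{s}}\omega)$ uniformly over sharp unit forms, and split that difference into a term controlled by the Lie-derivative estimate of Lemma \ref{lem:Reugularity_of_Lie_derivative} applied to $\ve_{s}-\ve_{\tau}$ and a term controlled by the mass-norm continuity of Lemma \ref{lem:emb_t(T)_smooth_con}. The only differences are cosmetic (you insert the cross term on the current side rather than inside the pullback, and you omit the paper's closing observation that the difference quotients form flat chains converging in the sharp norm, so the limit is indeed a sharp chain).
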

\begin{proof}
By the homotopy formula (\ref{eq:homotopy_formula_for_currents}),
we may write 
\begin{equation}
\frac{d}{dt}\left(\emb_{t\#}T\right)=\lim_{\eps\to0}\frac{\bnd\emb_{\#}\left(\left[\tau,\tau+\eps\right]\times T\right)+\emb_{\#}\left(\left[\tau,\tau+\eps\right]\times\bnd T\right)}{\eps}.\label{eq:driv_by_homotopy}
\end{equation}
Thus,
\begin{multline*}
\frac{1}{\eps}\Fsharp{\compm'}{\emb_{t+\eps\#}T-\emb_{t\#}T-\eps\left(\bnd\left(\ve_{t}\wedge\emb_{t\#}T\right)+\ve_{t}\wedge\bnd\left(\emb_{t\#}T\right)\right)}\\
\begin{split} & =\frac{1}{\eps}\sup_{\omega\in\D^{r}\left(\reals^{n}\right)}\frac{\left(\emb_{t+\eps\#}T-\emb_{t\#}T-\eps\left(\bnd\left(\ve_{t}\wedge\emb_{t\#}T\right)+\ve_{t}\wedge\bnd\left(\emb_{t\#}T\right)\right)\right)\omega}{\Fsharp{\compm'}{\omega}},\\
 & =\frac{1}{\eps}\sup_{\omega\in\D^{r}\left(\reals^{n}\right)}\left\{ \frac{\int_{t}^{t+\eps}\left[(\ve_{\tau}\we\emb_{\tau\#}T)(\cbnd\omega)+(\ve_{\tau}\we\emb_{\tau\#}\bnd T)(\omega)\right]d\lusb_{\tau}^{1}}{\Fsharp{\compm'}{\omega}}\right.\\
 & \:\quad\qquad\qquad\left.-\frac{\eps\left(\ve_{t}\wedge\emb_{t\#}T\right)\left(\cbnd\omega\right)+\ve_{t}\wedge\bnd\left(\emb_{t\#}T\right)\omega}{\Fsharp{\compm'}{\omega}}\right\} ,\\
 & \leq\frac{1}{\eps}\sup_{\omega\in\D^{r}\left(\reals^{n}\right)}\left\{ \frac{\int_{t}^{t+\eps}\left[T\left(\emb_{\tau}^{\#}\Lie_{\ve_{\tau}}\omega\right)\right]d\lusb_{\tau}^{1}-\eps T\left(\left(\emb_{t}^{\#}\Lie_{\ve_{t}}\omega\right)\right)}{\Fsharp{\compm'}{\omega}}\right\} ,\\
 & \leq\sup_{\omega\in\D^{r}\left(\reals^{n}\right)}\left\{ \Fmass T\sup_{s\in[t,t+\eps]}\frac{\Flmass{\compm}{\emb_{s}^{\#}\Lie_{\ve_{s}}\omega-\emb_{t}^{\#}\Lie_{\ve_{t}}\omega}}{\Fsharp{\compm'}{\omega}}\right\} .
\end{split}
\end{multline*}
For the term $\Flmass{\compm}{\emb_{s}^{\#}\Lie_{\ve_{s}}\omega-\emb_{t}^{\#}\Lie_{\ve_{t}}\omega}$,
one has the estimate
\begin{multline*}
\Flmass{\compm}{\emb_{s}^{\#}\Lie_{\ve_{s}}\omega-\emb_{t}^{\#}\Lie_{\ve_{t}}\omega}\\
\begin{split}\leq & \Flmass{\compm}{\emb_{s}^{\#}\Lie_{\ve_{s}}\omega-\emb_{s}^{\#}\Lie_{\ve_{t}}\omega+\emb_{s}^{\#}\Lie_{\ve_{t}}\omega-\emb_{t}^{\#}\Lie_{\ve_{t}}\omega}\\
\leq & \Flmass{\compm}{\emb_{s}^{\#}\Lie_{\ve_{s}}\omega-\emb_{s}^{\#}\Lie_{\ve_{t}}\omega}+\Flmass{\compm}{\emb_{s}^{\#}\Lie_{\ve_{t}}\omega-\emb_{t}^{\#}\Lie_{\ve_{t}}\omega}\\
\leq & \Flmass{\compm}{\emb_{s}^{\#}\left(\Lie_{\ve_{s}-\ve_{t}}\omega\right)}+\Flmass{\compm}{\emb_{s}^{\#}\Lie_{\ve_{t}}\omega-\emb_{t}^{\#}\Lie_{\ve_{t}}\omega}.
\end{split}
\end{multline*}
By Equation (\ref{eq:M_Lie_deriv}), 
\begin{eqnarray*}
\Flmass{\compm}{\emb_{s}^{\#}\left(\Lie_{\ve_{s}-\ve_{t}}\omega\right)} & \leq & \left(\Lip_{\emb_{s},\compm}\right)^{r}\Flmass{\compm'}{\Lie_{\ve_{s}-\ve_{t}}\omega},\\
 & \leq & \left(\Lip_{\emb_{s},\compm}\right)^{r}C(n,r)S_{\compm'}(\omega)\norm{\ve_{s}-\ve_{t}}{\Lip,\compm'},
\end{eqnarray*}
and since $\norm{\ve_{s}-\ve_{t}}{\Lip,\compm'}\to0$ as $s\to t$,
it follows that 
\[
\lim_{s\to t}\Flmass{\compm}{\emb_{s}^{\#}\left(\Lie_{\ve_{s}-\ve_{t}}\omega\right)}=0.
\]
Applying Lemma \ref{lem:emb_t(T)_smooth_con}, it follows that 
\[
\lim_{s\to t}\Flmass{\compm}{\emb_{s}^{\#}\Lie_{\ve_{t}}\omega-\emb_{t}^{\#}\Lie_{\ve_{t}}\omega}=0.
\]
Setting 
\[
T_{k}=\frac{\bnd\emb_{\#}\left(\left[t,t+\frac{1}{k}\right]\times T\right)+\emb_{\#}\left(\left[t,t+\frac{1}{k}\right]\times\bnd T\right)}{\frac{1}{k}},
\]
we obtain a sequence $\left\{ T_{k}\right\} $ of flat $r$-chains
converging to $\ve_{t}\wedge\bnd\left(\emb_{t\#}T\right)+\bnd\left(\ve_{t}\wedge\emb_{t\#}T\right)$
in $S_{r}\left(\reals^{n}\right)$. As normal currents are dense in
the space of flat chains one may obtain a sequence of normal $r$-currents
converging to $\ve_{t}\wedge\bnd\left(\emb_{t\#}T\right)+\bnd\left(\ve_{t}\wedge\emb_{t\#}T\right)$
in the $S_{\compm'}$-norm. \end{proof}
\begin{rem}
Note that the map $t\mapsto\bnd\left(\ve_{t}\wedge\emb_{t\#}T\right)+\hat{v}_{t}\wedge\bnd\left(\emb_{t\#}T\right)$
need not be continuous with respect to the $S_{K_{m}^{'}}$-norm,
as $\Fsharp{\compact'}{\omega}$ may depend of the second derivative
of $\omega\in\D^{r}\left(\reals^{n}\right)$. As an example, consider
the simple case of $\body\subset\reals$, and $T\in N(\oset)$ given
by $T(\omega)=\omega(x_{0})+\cbnd\omega(x_{0})$. For $v=e_{1}$,
\[
T\left(\Lie_{v}\omega\right)=D\omega(x_{0})+D^{2}\omega(x_{0}).
\]

\end{rem}

\section{Lipschitz Type Configurations and Motion \label{sec:Lipschitz_motion}}

This section extends the foregoing discussion to non smooth motions.
In particular, configurations represented by bi-Lipschitz maps, as
well as the corresponding motions, will be examined.

The definition of a motion, as introduced in Section \ref{sec:Smooth_motion},
is generalized by considering curves of the form 
\begin{equation}
\motion:\ti\to\Limb\left(\body,\reals^{n}\right),\label{eq:Def:motion_Lipschitz}
\end{equation}
which we assume are continuously differentiable with respect to the
strong Lipschitz topology. Thus, the time derivative of the map is
\[
\dot{\motion}:\ti\to\Lip\left(\body,\reals\right),
\]
a continuous curve with respect to the strong Lipschitz topology.
The motion $m$ induces a map 
\[
\conf:\ti\times\body\to\reals^{n},
\]
such that 
\begin{equation}
\conf(\tau,x)=\motion(\tau)(x),\quad\text{for all}\;\tau\in\ti,\: x\in\body,\label{eq:Lips_map_motion}
\end{equation}
and so 
\[
\dot{\conf}(\tau,x)=\dot{\motion}(\tau)(x),\quad\text{for all}\;\tau\in\ti,\: x\in\body.
\]

Using the results of Fukui \cite{Fukui2005}, we can make the analogous
definitions of $\compm$, $\compm'$, and $\body'$ as in Section
\ref{sec:Smooth_motion}. We consider the flow $J_{s,t}$ and the
vector field $\ve:\ti\times\reals^{n}\to\reals^{n}$ as in the smooth
case replacing $\emb$ with $\conf$. The vector field 
\[
\ve:\ti\times\reals^{n}\to\reals^{n},
\]
 is a time dependent Lipschitz vector field on $\reals^{n}$. As in
the smooth case, we consider the flow $J_{s,t}$ associated with the
time dependent vector field $\ve$ satisfying 
\[
\frac{\partial J_{s,t}}{\partial s}(x)=\ve_{s}\left(J_{s,t}\left(x\right)\right),\quad J_{t,t}(x)=x.
\]
For $\omega\in\D^{r}\left(\reals^{n}\right)$ it follows that 
\[
\frac{\partial(J_{\tau,t}^{\#}\omega)}{\partial\tau}\mid_{\tau=s}=J_{s,t}^{\#}\left(\Lie_{\ve_{s}}\omega\right),
\]
where $\Lie_{\ve_{s}}\omega$ is the Lie derivative of $\omega$ with
respect to Lipschitz vector field $\ve_{s}$ as discussed is Section
\ref{sec:Lie-derivative}. In particular, 
\begin{equation}
\frac{\partial(\conf_{\tau}^{\#}\omega)}{\partial\tau}\mid_{\tau=t}=\conf_{t}^{\#}\left(\Lie_{\ve_{t}}\omega\right).\label{eq:d/dt_pullback_Lipschitz}
\end{equation}

We note that for a given $\omega\in\D^{r}\left(\reals^{n}\right)$,
the curve $t\mapsto\conf_{t}^{\#}\omega$ is valued in the space of
flat $r$-forms in $\body$. As in section \ref{sec:Smooth_motion},
the curve $t\mapsto\conf_{t\#}\left(T\right)$ is used to model the
time evolution of the generalized domain. The main results described
in Section \ref{sec:kinematics_cerrents} apply in the case of Lipschitz
motions. 
\begin{lem}
\label{lem:conf_t(T)_Lipschitz}For the mapping $\conf$, as defined
in Equation (\ref{eq:Lips_map_motion}), and a flat chain $T\in F_{r,\compm}\left(\body\right)$
with $\Fmass T<\infty$, the curve $t\mapsto\conf_{t\#}\left(T\right)$,where
$\conf_{t\#}\left(T\right)$ is defined in Section \ref{sec:fushforward_homotopy}
by $\conf_{t\#}\left(T\right)=\lim_{\rho\to0}\left(\mol{\rho}\conf_{t}\right)\left(T\right)$,
is continuous with respect to the $M_{\compm'}$-norm on $\D_{r}\left(\reals^{n}\right)$. \end{lem}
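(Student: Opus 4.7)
The plan is to mirror the proof of Lemma~\ref{lem:emb_t(T)_smooth_con} almost verbatim, the essential difference being that $\conf_{t}^{\#}\omega$ is now only a flat $r$-form on $\body$ (defined $\lusb^{n}$-almost everywhere via Rademacher's theorem) rather than a smooth form. Via the flat-cochain/flat-form identification recalled at the end of Section~\ref{sec:fushforward_homotopy}, the pairing $(\conf_{t+\eps\#}T-\conf_{t\#}T)(\omega)=T(\conf_{t+\eps}^{\#}\omega-\conf_{t}^{\#}\omega)$ is well defined. Because $T$ is a flat $r$-chain of finite mass with $\spt(T)\subset\compm$, the standard bound $|T(\phi)|\le\Fmass T\cdot\Flmass{\compm}{\phi}$ for smooth $\phi$ extends by approximation to every flat $r$-form $\phi$ on $\body$, which is the one new ingredient needed to begin.

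Starting from the duality for the $M_{\compm'}$-seminorm and inserting the above mass bound, I would write
\[
\Flmass{\compm'}{\conf_{t+\eps\#}T-\conf_{t\#}T}=\sup_{\omega\in\D^{r}(\reals^{n})}\frac{T(\conf_{t+\eps}^{\#}\omega-\conf_{t}^{\#}\omega)}{\Flmass{\compm'}{\omega}}\le\Fmass T\cdot\sup_{\omega\in\D^{r}(\reals^{n})}\frac{\Flmass{\compm}{\conf_{t+\eps}^{\#}\omega-\conf_{t}^{\#}\omega}}{\Flmass{\compm'}{\omega}}.
\]
At every point of simultaneous Rademacher-differentiability of $\conf_{t}$ and $\conf_{t+\eps}$ in $\compm$, I would perform the same telescoping as in the smooth case to obtain
\[
\Flmass{\compm}{\conf_{t+\eps}^{\#}\omega-\conf_{t}^{\#}\omega}\le\Flmass{\compm'}{\omega}\bigl(\Lip_{\conf_{t+\eps}-\conf_{t},\compm}\bigr)^{r}+\bigl(\Lip_{\conf_{t},\compm}\bigr)^{r}\Flmass{\compm'}{\omega\circ\conf_{t+\eps}-\omega\circ\conf_{t}}.
\]

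To conclude, I would invoke the standing hypothesis that $\motion:\ti\to\Limb\left(\body,\reals^{n}\right)$ is of class $C^{1}$ in the strong Lipschitz topology together with Lemma~\ref{lem:compact_strong_topology}: as $\eps\to 0$, both $\Lip_{\conf_{t+\eps}-\conf_{t},\compm}\to 0$ and $\|\conf_{t+\eps}-\conf_{t}\|_{\infty,\compm}\to 0$. The first convergence controls the first term uniformly over $\omega$ with $\Flmass{\compm'}{\omega}\le 1$; the second, combined with the smoothness (hence uniform continuity on a neighborhood of $\compm'$) of $\omega$, controls the second term. Together with the uniform upper bound on $\Lip_{\conf_{t},\compm}$ (finite by Lemma~\ref{lem:compact_strong_topology}) this forces $\Flmass{\compm'}{\conf_{t+\eps\#}T-\conf_{t\#}T}\to 0$.

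The main obstacle, and the only genuinely new ingredient beyond the smooth case, is the legitimacy of pairing $T$ against the merely flat form $\conf_{s}^{\#}\omega$ with the mass bound above. This rests on Wolfe's representation of flat cochains and on the definition $\conf_{t\#}T=\lim_{\rho\to 0}\left(\mol{\rho}\conf_{t}\right)_{\#}T$ as a flat-norm limit: the estimate is stable under mollification because mollification increases neither $\Lip_{\conf_{t},\compm}$ nor $\Fmass T$, so the smooth-case bound passes to the limit and the telescoping argument applies unchanged.
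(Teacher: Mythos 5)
Your proposal is correct and follows essentially the same route as the paper: the paper simply keeps the mollification parameter $\rho$ explicit, performs the smooth-case telescoping estimate on $\left(\mol{\rho}\conf_{t}\right)^{\#}\omega$, and uses $\lim_{\rho\to0}\Lip_{\left(\mol{\rho}\left(\conf_{t+\eps}-\conf_{t}\right)\right),\compm}=\Lip_{\conf_{t+\eps}-\conf_{t},\compm}$ before letting $\eps\to0$, which is precisely the mollification-stability argument you defer to your final paragraph. The conclusion via continuity of the motion in the strong Lipschitz topology is identical to the paper's.
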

\begin{proof}
let $t\in\ti$ and select $\eps$ such that $t+\eps\in\ti$. Then,
\begin{multline*}
\Flmass{\compm'}{\conf_{t+\eps\#}T-\conf_{t\#}T}\\
\begin{split} & =\sup_{\omega\in\D^{r}\left(\reals^{n}\right)}\left\{ \lim_{\rho\to0}\frac{T\left(\left(\mol{\rho}\conf_{t+\eps}\right)^{\#}\omega-\left(\mol{\rho}\conf_{t}\right)^{\#}\omega\right)}{\Flmass{\compm'}{\omega}}\right\} ,\\
 & \leq\sup_{\omega\in\D^{r}\left(\reals^{n}\right)}\left\{ \lim_{\rho\to0}\frac{\Fmass T\Flmass{\compm}{\left(\mol{\rho}\conf_{t+\eps}\right)^{\#}\omega-\left(\mol{\rho}\conf_{t}\right)^{\#}\omega}}{\Flmass{\compm'}{\omega}}\right\} ,
\end{split}
\end{multline*}
 and as in the proof of Lemma \ref{lem:emb_t(T)_smooth_con}
\begin{multline*}
\Flmass{\compm}{\left(\mol{\rho}\conf_{t+\eps}\right)^{\#}\omega-\left(\mol{\rho}\conf_{t}\right)^{\#}\omega}\\
\begin{split} & \leq\Flmass{\compm'}{\omega}\left(\Lip_{\left(\mol{\rho}\conf_{t+\eps}-\mol{\rho}\conf_{t}\right),\compm}\right)^{r}\\
 & \quad+\left(\Lip_{\mol{\rho}\conf_{t},\compm}\right)^{r}\Flmass{\compm'}{\omega\circ\left(\mol{\rho}\conf_{t+\eps}\right)-\omega\circ\left(\mol{\rho}\conf_{t}\right)}.
\end{split}
\end{multline*}
 In addition, 
\[
\lim_{\rho\to0}\Lip_{\left(\mol{\rho}\conf_{t+\eps}-\mol{\rho}\conf_{t}\right),\compm}=\lim_{\rho\to0}\Lip_{\left(\mol{\rho}\left(\conf_{t+\eps}-\conf_{t}\right)\right),\compm}=\Lip_{\conf_{t+\eps}-\conf_{t},\compm}.
\]
We now consider the limit of the foregoing estimate when $\eps\to0$.
By the continuity of the motion it follows that $\Lip_{\conf_{t+\eps}-\conf_{t},\compm}\to0$
as $\eps\to0$. The second term in the estimate above vanishes by
the continuity of $\omega$ and the continuity of the mollified motion
$\mol{\rho}\conf$ with respect to the strong Lipschitz topology.
 In conclusion, 
\[
\lim_{\eps\to0}\Flmass{\compm'}{\conf_{t+\eps\#}T-\conf_{t\#}T}=0.
\]

\end{proof}
The following theorem is a generalization of Theorem \ref{thm:driv_smooth_motion}.
\begin{thm}
Let $\conf$ be a map associated with a motion as defined by Equation
(\ref{eq:Lips_map_motion}) and let $T\in F_{r,\compm}\left(\body\right)$
be a flat chain of finite mass. The derivative $\frac{d}{dt}\left(\conf_{t\#}T\right)\mid_{t=\tau}$,
exists in the topology of $\Fsharp{\compm'}{\reals^{n}}$ and is given
by
\begin{equation}
\frac{d}{dt}\left(\conf_{t\#}T\right)\mid_{t=\tau}=\bnd\left(\ve_{\tau}\wedge\conf_{\tau\#}T\right)+\ve_{\tau}\wedge\bnd\left(\conf_{\tau\#}T\right).\label{eq:ddt(conf(T))}
\end{equation}
\end{thm}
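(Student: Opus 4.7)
The plan is to mirror the proof of Theorem \ref{thm:driv_smooth_motion} line by line, replacing each smooth object by its Lipschitz counterpart and invoking the mollification-based extensions prepared at the ends of Sections \ref{sec:fushforward_homotopy} and \ref{sec:Lie-derivative}. First, I would promote the integral representation (\ref{eq:emb_pushforward_T}) to the Lipschitz case. For the smooth approximations $\mol{\rho}\conf$, Equation (\ref{eq:emb_pushforward_T}) yields
\[
(\mol{\rho}\conf)_{\#}\left([a,b]\times T\right)(\omega)=\int_{a}^{b}\left(\mol{\rho}\ve_{\tau}\we(\mol{\rho}\conf_{\tau})_{\#}T\right)(\omega)\,d\lusb^{1}_{\tau}.
\]
Passing $\rho\to 0$, the left-hand side converges by the extension of the homotopy formula to Lipschitz maps, while the right-hand side converges by dominated convergence using the pointwise convergence of the mollified vector field and the flat-norm continuity of the pushforward. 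This delivers the identity $\conf_{\#}([a,b]\times T)(\omega)=\int_{a}^{b}(\ve_{\tau}\we\conf_{\tau\#}T)(\omega)\,d\lusb^{1}_{\tau}$, and in parallel with the smooth derivation it rewrites the difference quotient applied to any $\omega\in\D^{r}(\reals^{n})$ as $\frac{1}{\eps}\int_{t}^{t+\eps}T(\conf_{\tau}^{\#}\Lie_{\ve_{\tau}}\omega)\,d\lusb^{1}_{\tau}$, where the pullback is understood as a flat $r$-form and the action of $T$ on it is interpreted through the flat-cochain duality.

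Second, I would bound
\[
\frac{1}{\eps}\Fsharp{\compm'}{\conf_{t+\eps\#}T-\conf_{t\#}T-\eps\bigl(\bnd(\ve_{t}\we\conf_{t\#}T)+\ve_{t}\we\bnd(\conf_{t\#}T)\bigr)}
\]
by the sharp-norm duality, exactly as in Theorem \ref{thm:driv_smooth_motion}, obtaining
\[
\leq\Fmass{T}\sup_{s\in[t,t+\eps]}\sup_{\omega}\frac{\Flmass{\compm}{\conf_{s}^{\#}\Lie_{\ve_{s}}\omega-\conf_{t}^{\#}\Lie_{\ve_{t}}\omega}}{\Fsharp{\compm'}{\omega}}.
\]
I would then split the numerator via the triangle inequality into $\Flmass{\compm}{\conf_{s}^{\#}(\Lie_{\ve_{s}-\ve_{t}}\omega)}$ and $\Flmass{\compm}{\conf_{s}^{\#}\Lie_{\ve_{t}}\omega-\conf_{t}^{\#}\Lie_{\ve_{t}}\omega}$. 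Using the Lipschitz version of the mass bound (\ref{eq:f(T)_bounds_Lipschitz}) together with the Lipschitz extension of Lemma \ref{lem:Reugularity_of_Lie_derivative}, the first piece is controlled by $(\Lip_{\conf_{s},\compm})^{r}C(n,r)\Fsharp{\compm'}{\omega}\norm{\ve_{s}-\ve_{t}}{\Lip,\compm'}$, which vanishes as $\eps\to 0$ by the $C^{1}$-continuity of $\motion$ in the strong Lipschitz topology.

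The main obstacle is the second piece, $\Flmass{\compm}{\conf_{s}^{\#}\Lie_{\ve_{t}}\omega-\conf_{t}^{\#}\Lie_{\ve_{t}}\omega}$, since Lemma \ref{lem:conf_t(T)_Lipschitz} is stated only for smooth test forms whereas $\Lie_{\ve_{t}}\omega$ is only a flat form of locally finite mass. I would handle this by approximating $\Lie_{\ve_{t}}\omega$ by its mollifications $\mol{\rho}(\Lie_{\ve_{t}}\omega)$, applying Lemma \ref{lem:conf_t(T)_Lipschitz} to each smooth approximant, and then controlling the approximation error uniformly in $s$ using that $\sup_{s\in[t,t+\eps]}\Lip_{\conf_{s},\compm}$ is finite (since $\motion$ is continuous into $\Limb(\body,\reals^{n})$) together with the $\Flmass{\compm'}{\cdot}$-bound on $\Lie_{\ve_{t}}\omega-\mol{\rho}(\Lie_{\ve_{t}}\omega)$. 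Once this step is in place, the same closing argument as in Theorem \ref{thm:driv_smooth_motion}, namely that the difference quotients form flat chains $T_{k}$ converging to $\bnd(\ve_{\tau}\we\conf_{\tau\#}T)+\ve_{\tau}\we\bnd(\conf_{\tau\#}T)$ in the sharp topology, finishes the proof.
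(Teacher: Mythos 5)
Your proposal mirrors the paper's proof step for step: the same sharp-norm duality bound, the same reduction via the Lipschitz-extended homotopy formula to the integrand $T\left(\conf_{\tau}^{\#}\Lie_{\ve_{\tau}}\omega\right)$, the same triangle-inequality split of $\Flmass{\compm}{\conf_{s}^{\#}\Lie_{\ve_{s}}\omega-\conf_{t}^{\#}\Lie_{\ve_{t}}\omega}$ into two pieces, and the same treatment of the first piece via Equation (\ref{eq:M_Lie_deriv}) and the continuity of $t\mapsto\ve_{t}$ in the strong Lipschitz topology. The only place you genuinely depart from the paper is the second piece: the paper simply cites Lemma \ref{lem:conf_t(T)_Lipschitz} to get $\Flmass{\compm}{\conf_{s}^{\#}\Lie_{\ve_{t}}\omega-\conf_{t}^{\#}\Lie_{\ve_{t}}\omega}\to0$, whereas you correctly point out that the estimate inside that lemma relies on the continuity (indeed smoothness) of the test form, while $\Lie_{\ve_{t}}\omega$ is only a form of locally finite mass. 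That is a real gap in the paper's own argument, and identifying it is to your credit. Be aware, however, that your proposed patch does not close it as stated: you need $\Flmass{\compm'}{\Lie_{\ve_{t}}\omega-\mol{\rho}\left(\Lie_{\ve_{t}}\omega\right)}\to0$, but the $M_{\compact}$-seminorm is an essential supremum, and mollification of a merely essentially bounded form need not converge in that seminorm --- and the coefficients of $\Lie_{\ve_{t}}\omega$ contain the entries of $D\ve_{t}$, which Rademacher's theorem provides only as $L^{\infty}$ functions. Closing this step requires a different mechanism (for instance, exploiting the representation of $T$ by integration so that only an $L^{1}(\measure T)$-type convergence of $\conf_{s}^{\#}\Lie_{\ve_{t}}\omega$ is needed, uniformly over the unit ball of the sharp norm), and as written both your argument and the paper's leave it open. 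Your mollification derivation of the integral representation $\conf_{\#}\left(\left[a,b\right]\times T\right)(\omega)=\int_{a}^{b}\left(\ve_{\tau}\we\conf_{\tau\#}T\right)(\omega)\,d\lusb_{\tau}^{1}$ is a reasonable elaboration of what the paper compresses into the remark that the homotopy theorem holds in the Lipschitz case.
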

\begin{proof}
We have to show that
\[
\lim_{\eps\to0}\Fsharp{\compm'}{\frac{\conf_{\tau+\eps\#}T-\conf_{\tau\#}T}{\eps}-\left(\bnd\left(\ve_{\tau}\wedge\conf_{\tau\#}T\right)+\ve_{\tau}\wedge\bnd\left(\conf_{\tau\#}T\right)\right)}=0.
\]
In an analogous manner to the proof of theorem \ref{thm:driv_smooth_motion},
and as the homotopy theorem holds for the Lipschitz case, it follows
that 

\[
\begin{split} & \frac{1}{\eps}\Fsharp{\compm'}{\emb_{t+\eps\#}T-\emb_{t\#}T-\eps\left(\bnd\left(\ve_{t}\wedge\emb_{t\#}T\right)+\ve_{t}\wedge\bnd\left(\emb_{t\#}T\right)\right)}\\
 & \quad\leq\frac{1}{\eps}\sup_{\omega\in\D^{r}\left(\reals^{n}\right)}\left\{ \frac{\int_{t}^{t+\eps}\left[T\left(\emb_{\tau}^{\#}\Lie_{\ve_{\tau}}\omega\right)\right]d\lusb_{\tau}^{1}-\eps T\left(\left(\emb_{t}^{\#}\Lie_{\ve_{t}}\omega\right)\right)}{\Fsharp{\compm'}{\omega}}\right\} ,\\
 & \quad\leq\sup_{\omega\in\D^{r}\left(\reals^{n}\right)}\left\{ \Fmass T\sup_{s\in[t,t+\eps]}\frac{\Flmass{\compm}{\emb_{s}^{\#}\Lie_{\ve_{s}}\omega-\emb_{t}^{\#}\Lie_{\ve_{t}}\omega}}{\Fsharp{\compm'}{\omega}}\right\} .
\end{split}
\]
The term $\Flmass{\compm}{\emb_{s}^{\#}\Lie_{\ve_{s}}\omega-\emb_{t}^{\#}\Lie_{\ve_{t}}\omega}$
may be estimated as
\begin{multline*}
\Flmass{\compm}{\emb_{s}^{\#}\Lie_{\ve_{s}}\omega-\emb_{t}^{\#}\Lie_{\ve_{t}}\omega}\\
\begin{split}\leq & \Flmass{\compm}{\emb_{s}^{\#}\Lie_{\ve_{s}}\omega-\emb_{s}^{\#}\Lie_{\ve_{t}}\omega+\emb_{s}^{\#}\Lie_{\ve_{t}}\omega-\emb_{t}^{\#}\Lie_{\ve_{t}}\omega},\\
\leq & \Flmass{\compm}{\emb_{s}^{\#}\Lie_{\ve_{s}}\omega-\emb_{s}^{\#}\Lie_{\ve_{t}}\omega}+\Flmass{\compm}{\emb_{s}^{\#}\Lie_{\ve_{t}}\omega-\emb_{t}^{\#}\Lie_{\ve_{t}}\omega},\\
\leq & \Flmass{\compm}{\emb_{s}^{\#}\left(\Lie_{\ve_{s}-\ve_{t}}\omega\right)}+\Flmass{\compm}{\emb_{s}^{\#}\Lie_{\ve_{t}}\omega-\emb_{t}^{\#}\Lie_{\ve_{t}}\omega}.
\end{split}
\end{multline*}
By Equation (\ref{eq:M_Lie_deriv}) we have 
\begin{eqnarray*}
\Flmass{\compm}{\emb_{s}^{\#}\left(\Lie_{\ve_{s}-\ve_{t}}\omega\right)} & \leq & \left(\Lip_{\emb_{s},\compm}\right)^{r}\Flmass{\compm'}{\Lie_{\ve_{s}-\ve_{t}}\omega},\\
 & \leq & \left(\Lip_{\emb_{s},\compm}\right)^{r}C(n,r)S_{\compm'}(\omega)\norm{\ve_{s}-\ve_{t}}{\Lip,\compm'}
\end{eqnarray*}
and since $\norm{\ve_{s}-\ve_{t}}{\Lip,\compm'}\to0$ as $s\to t$,
it follows that 
\[
\lim_{s\to t}\Flmass{\compm}{\emb_{s}^{\#}\left(\Lie_{\ve_{s}-\ve_{t}}\omega\right)}=0.
\]
Lemma \ref{lem:conf_t(T)_Lipschitz}, implies that 
\[
\lim_{s\to t}\Flmass{\compm}{\emb_{s}^{\#}\Lie_{\ve_{t}}\omega-\emb_{t}^{\#}\Lie_{\ve_{t}}\omega}=0
\]
which completes the proof. 
\end{proof}

\section{The Product Rule and the Transport Theorem \label{sec:Transport_thm}}

In this section we apply the foregoing results to obtain a generalized
formulation of the transport theorem for a region of any dimension,
and in particular, the surface transport theorem. In view of the postulates
for a motion described in Section \ref{sec:Lipschitz_motion}, we
recall that for an $r$-current of finite mass, $T\in F_{r,\compm}\left(\body\right)$,
and the map $\conf:\ti\times\body\to\reals^{n}$ associated with a
motion by Equation (\ref{eq:Lips_map_motion}), one has 
\[
\conf_{t\#}\left(T\right)\in F_{r,\compm'}\left(\reals^{n}\right),\quad\text{and}\quad\frac{d}{dt}\left(\conf_{t\#}\left(T\right)\right)\in S_{r,\compm'}\left(\reals^{n}\right),\quad\text{for all }t\in\ti.
\]

The standard transport theorem is concerned with the integration of
a density of some extensive property over an evolving region. Here,
the evolving region is generalized and is represented by the evolving
flat chain $\conf_{t\#}\left(T\right)$, and the extensive property
at any instant is represented by a sharp cochain. As any sharp cochain
is a flat cochain, the integration operation in the classical formulation
of the theorem is replaced by the action of a cochain on a chain.
Thus, an evolving extensive property, $\psi$, is viewed as a continuous
curve in the space of sharp $r$-cochains, that is, a continuous mapping
\[
\cochain_{\psi}:\ti\to\left[S_{r}\left(\reals^{n}\right)\right]^{*}.
\]
Moreover, we assume that the curve is differentiable in the topology
of flat $r$-cochains in $\reals^{n}$. That is, the limit 
\[
\lim_{\eps\to0}\frac{\cochain_{\psi}(\tau+\varepsilon)-\cochain_{\psi}(\tau)}{\eps},
\]
exists as a flat $r$-cochain, and it will denoted by $\dot{\cochain}_{\psi}(\tau)$.

For each $t\in\ti$ the total of the extensive property in the flat
$r$-chain $T$ is therefore 
\[
\cochain_{\psi}\left(\conf_{\#}\left(T\right)\right)(t)=\cochain_{\psi}(t)\left(\conf_{t\#}\left(T\right)\right).
\]
For the time derivative of $\cochain_{\psi}\left(\conf_{\#}\left(T\right)\right)$,
it easily follows that we have
\[
\begin{split}\frac{d}{dt}\cochain_{\psi}\left(\conf_{\#}\left(T\right)\right) & =\lim_{\varepsilon\to0}\frac{\cochain_{\psi}(t+\varepsilon)\left(\conf_{t+\epsilon\#}\left(T\right)\right)-\cochain_{\psi}(t)\left(\conf_{t\#}\left(T\right)\right)}{\varepsilon},\\
 & =\cochain_{\psi}(t)\left(\frac{d}{dt}\conf_{t\#}\left(T\right)\right)+\frac{d\cochain_{\psi}}{dt}(t)(\conf_{t\#}\left(T\right)).
\end{split}
\]
By Equation (\ref{eq:ddt(conf(T))}), it follows that
\begin{multline}
\frac{d}{dt}\left(\cochain_{\psi}\left(\conf_{\#}\left(T\right)\right)\right)\mid_{t=\tau}\\
\begin{split} & =\dot{\cochain}_{\psi}(\tau)\left(\conf_{\tau\#}\left(T\right)\right)+\cochain_{\psi}(\tau)\left(\bnd\left(\ve_{\tau}\wedge\conf_{\tau\#}\left(T\right)\right)+\ve_{\tau}\wedge\conf_{\tau\#}\left(\bnd T\right)\right).\end{split}
\label{eq:Transport}
\end{multline}
In the case where $\cochain_{\psi}$ is represented by a smooth differential
$r$-form $\fform{\psi}$ and $\conf$ is a smooth embedding we obtain
\[
\begin{split}\frac{d}{dt}\left(\cochain_{\psi}(t)\left(\conf_{t\#}\left(T\right)\right)\right)\mid_{t=\tau} & =\left(\conf_{\tau\#}\left(T\right)\right)\left(\dot{\fform{\psi}}+\cbnd\fform{\psi}\rest\ve_{\tau}+\cbnd\left(\fform{\psi}\rest\ve_{\tau}\right)\right),\\
 & =\conf_{\tau\#}\left(T\right)\left(\dot{\fform{\psi}}+\Lie_{\ve_{\tau}}\fform{\psi}\right),
\end{split}
\]
which is a generalization of \cite[equation (T0)]{Betounes1986}.
For a time dependent $r$-form $\fform{\psi}$, using \cite[pp.~367]{Marsden1988}
and the notation of Section \ref{sec:Lipschitz_motion}, it follows
that
\begin{eqnarray*}
\frac{d}{dt}\conf_{t}^{\#}\left(\fform{\psi}(t)\right)\mid_{t=\tau} & = & \conf_{\tau}^{\#}\left(\dot{\fform{\psi}}(\tau)+\cbnd\fform{\psi}(\tau)\rest\ve_{\tau}+\cbnd\left(\fform{\psi}(\tau)\rest\ve_{\tau}\right)\right).
\end{eqnarray*}
Applying the last expression to Equation (\ref{eq:Transport}), we
obtain the following relation between the Eulerian and Lagrangian
formulations of the transport theorem 
\[
\frac{d}{dt}\left(\cochain_{\psi}\left(\conf_{\#}\left(T\right)\right)(t)\right)\mid_{t=\tau}=\left(\frac{d}{dt}\conf_{t}^{\#}\left(\cochain_{\psi}(t)\right)\mid_{t=\tau}\right)T.
\]

For the general case, the sharp $r$-cochain $\cochain_{\psi}(\tau)$
is represented by a sharp $r$-form $\fform{\psi}(\tau)$ and the
sharp $r$-cochain $\dot{\cochain}_{\psi}(\tau)$ is represented by
the flat $r$-form $\dot{\fform{\psi}}(\tau)$. As $T$ is a flat
chain of finite mass it may be represented by $\lusb^{n}$-integrable
vector field $T=\eta\wedge\lusb^{n}$. It is concluded that
\begin{multline*}
\frac{d}{dt}\left(\cochain_{\psi}\left(\conf_{\#}\left(T\right)\right)\right)\mid_{t=\tau}\\
\begin{split} & =\left(\conf_{\tau\#}\left(T\right)\right)\left(\dot{\fform{\psi}}(\tau)+\cbnd\fform{\psi}(\tau)\rest\ve_{\tau}+\cbnd\left(\fform{\psi}(\tau)\rest\ve_{\tau}\right)\right),\\
 & =\int_{\body}\conf_{\tau}^{\#}\left(\dot{\fform{\psi}}(\tau)+\cbnd\fform{\psi}(\tau)\rest\ve_{\tau}+\cbnd\left(\fform{\psi}(\tau)\rest\ve_{\tau}\right)\right)(\eta)d\lusb_{x}^{n}.
\end{split}
\end{multline*}

As an example consider the situation where the flat form $\fform{\psi}$
satisfies Cauchy's postulates. Then, associated with the property
are a time dependent flat $r$-form $\fform{\form}$ representing
the source, and a time dependent flat $(r-1)$-form $\fform{\xi}$
representing the flux, such that the differential balance equation
for the property is (see \cite{Segev2012})
\begin{equation}
\dot{\fform{\psi}}+\cbnd\fform{\xi}=\fform{\form}.
\end{equation}
Each of the forms above represents a flat cochain denoted by $\cochain_{\omega},\:\cochain_{\xi},\;\cochain_{\form}$,
respectively, such that 
\[
\dot{\cochain}_{\omega}+\cbnd\cochain_{\xi}=\cochain_{\form}.
\]
 Thus, the transport formula assumes the form
\begin{multline*}
\frac{d}{dt}\left(\cochain_{\psi}\left(\conf_{\#}\left(T\right)\right)\right)\mid_{t=\tau}\\
\begin{split}= & \dot{\cochain}_{\psi}(\tau)\left(\conf_{\tau\#}\left(T\right)\right)+\cochain_{\psi}(\tau)\left(\frac{\partial}{\partial t}\conf_{t\#}\left(T\right)\mid_{t=\tau}\right),\\
= & \dot{\cochain}_{\psi}(\tau)\left(\conf_{\tau\#}\left(T\right)\right)+\cochain_{\psi}(\tau)\left(\bnd\left(\ve_{\tau}\wedge\conf_{\tau\#}T\right)+\ve_{\tau}\wedge\bnd\left(\conf_{\tau\#}T\right)\right),\\
= & \left(\cochain_{\form}\right)\left(\conf_{\#}\left(T\right)\right)(\tau)+\left(\cochain_{\psi}\rest\ve-\cochain_{\xi}\right)\left(\conf_{\#}\left(\bnd T\right)\right)(\tau)+\cochain_{\psi}\left(\bnd\left(\ve\wedge\conf_{\#}T\right)\right)(\tau).
\end{split}
\end{multline*}

\bigskip{}

\noindent \textbf{\textit{Acknowledgments.}} This work was partially
supported by the Perlstone Center for Aeronautical Engineering Studies,
the Kreitman Post-Doctoral Scholarship and the H.~Greenhill Chair
for Theoretical and Applied Mechanics at Ben-Gurion University of
the Negev. The authors wish to thank M. Silhavy for the discussions
and comments he has made on earlier versions of this work.

\bibliographystyle{plain}

\end{document}